\def\ps@pprintTitle{%
 \let\@oddhead\@empty
 \let\@evenhead\@empty
 \def\@oddfoot{\centerline{\thepage}}%
 \let\@evenfoot\@oddfoot}
\def\paragraph{\@startsection{paragraph}{4}%
  \z@\z@{-\fontdimen2\font}%
  {\normalfont\bfseries}}
\theoremstyle{plain}
\definecolor{Gray}{gray}{0.9}
\setlist[tablenotes]{label=\tnote{\alph*},ref=\alph*,itemsep=\z@,topsep=\z@skip,partopsep=\z@skip,parsep=\z@,itemindent=\z@,labelindent=\tabcolsep,labelsep=.2em,leftmargin=*,align=left,before={\footnotesize}}
\newtheorem{thm}{Theorem}[section]
\newtheorem{prop}[thm]{Proposition}
\theoremstyle{definition}
\theoremstyle{remark}
\newtheorem{remark}[thm]{Remark}
\providecommand{\mathscr}[1]{\mathcal{#1}}
\def\R{{\Bbb R}}
\def\P{{\Bbb P}}
\def\NN{{\Bbb N}}
\def\1{{\bf 1}}
\def\A{{\cal A}}
\def\bP{{\sf P}}
\def\E{{\mathbb{E}}}
\theoremstyle{plain}
\newtheorem{Theorem}{Theorem}
\newtheorem{Lemma*}{Lemma}
\theoremstyle{definition}
\newtheorem{Definition}{Definition}
\theoremstyle{remark}
\def\correspondingauthor{\footnote{Corresponding author.}}
\newcommand{\UniBA}{%
   Universit\`{a} degli Studi di Bari ``Aldo Moro'',
   Department of~Economics and Finance,
   Largo Abbazia S.~Scolastica,
   Bari, I-70124 Italy}
\begin{document}

\title{Modeling Volatility of Disaster-Affected Populations: A Non-Homogeneous Geometric-Skew Brownian Motion Approach}

\author[ssm]{Giacomo Ascione} 
\address[ssm]{Scuola Superiore Meridionale, Largo S. Marcellino 10, Napoli, 80138, \texttt{g.ascione@ssmeridionale.it}}

\author[uniro]{Michele Bufalo} 
\address[uniro]{Universit\`a degli Studi di Roma "La Sapienza" - Department of Methods and Models for Economics, Territory and Finance, Via del Castro Laurenziano 9, Roma, I-00185, Tel. +39-06-49766903, \texttt{michele.bufalo@uniroma1.it}}

\author[uniba]{Giuseppe Orlando\correspondingauthor{}}
\address[uniba]{\UniBA, Tel. +39 080 5049218, \texttt{giuseppe.orlando@uniba.it}}

\begin{abstract}
This paper delves into the impact of natural disasters on affected populations and underscores the imperative of reducing disaster-related fatalities through proactive strategies. On average, approximately 45,000 individuals succumb annually to natural disasters amid a surge in economic losses. The paper explores catastrophe models for loss projection, emphasizes the necessity of evaluating volatility in disaster risk, and introduces an innovative model that integrates historical data, addresses data skewness, and accommodates temporal dependencies to forecast shifts in mortality. To this end, we introduce a time-varying skew Brownian motion model, for which we provide proof of the solution's existence and uniqueness. In this model, parameters change over time, and past occurrences are integrated via volatility.
\end{abstract}

\begin{keyword} 
Skew Brownian motion; Natural disasters; Time dependency
JEL Classification: C22; Q54; C53

\MSC[2020] 60J65; 91B84; 91B76; 92Fxx 

\end{keyword}

\maketitle

\newpage

\section{Introduction}

This paper deals with total people affected by natural disasters because, as mentioned by \cite{Ritchie2022}, while preventing infrequent, high-impact events might be difficult, reducing overall disaster-related deaths is achievable through early prediction, resilient infrastructure, emergency preparedness, and response systems. Particularly vulnerable are individuals with low incomes; improving living standards and response mechanisms in these areas will be crucial in preventing natural disaster-related deaths in the future.
The death toll from natural disasters can vary significantly from year to year, with some years having few deaths while others experience major disasters causing numerous fatalities. On average over the past decade, around 45,000 people worldwide die annually due to natural disasters, making up about 0.1\% of global deaths. In some years, deaths can be quite low, even as low as 0.01\% of total fatalities. However, major events like famines, earthquakes, and tsunamis have caused the death toll to exceed 200,000 in certain years, constituting over 0.4\% of deaths \cite{Ritchie2022}.  

Adjusted for inflation, economic losses from natural disasters have increased over recent decades, with the number of significant loss-causing events tripling since the 1980s. Notable examples include the Northridge earthquake (1994), Kobe earthquake (1995), Asian tsunami (2004), Hurricane Katrina (2005), Japan earthquake and tsunami (2011), and Hurricane Harvey (2017) \cite{Botzen2019}. 
On the contrary, while in the early to mid-20th century annual deaths from disasters often exceeded a million, in recent decades have seen a substantial decline, with most years witnessing fewer than 20,000 deaths and even fewer in the last decade, even during high-impact events. This decline is remarkable, considering population growth  \cite{Ritchie2022}, and it is explained by the fact that economic development functions as implicit insurance against shocks from natural events \cite{Kahn2005}.

Catastrophe models employ geographic information systems (GIS) to predict potential losses caused by natural disasters by simulating hypothetical hazard characteristics at specific locations \cite{deMoel2015}. For instance, flood hazard maps depict flood-prone areas, inundation depths, and flow velocities. These characteristics help calculate damage to exposed property based on vulnerability assumptions. These models estimate damage across various intensities and probabilities, yielding annual expected damage. While focused on property damage estimation, catastrophe models also project impacts on populations and potential casualties resulting from specific natural disasters \cite{Jonkman2008}. This is because  population and economic growth remain key drivers of these losses \cite{Botzen2019}.

With that said, assessing volatility is crucial from both a risk management and mitigation perspective.
Embracing a risk-averse standpoint and incorporating the volatility linked to disaster risk carries significant implications for evaluating core projects and risk management strategies. Considering natural disaster volatility is vital for evaluating both secondary risk management projects and primary investments \cite{Kreimer2010}. Nations facing high natural hazard exposure and limited coping capacities should carefully weigh disaster risk and volatility in project decisions. Accounting for outcome volatility increases the appeal of risk transfer measures, especially for risk-averse situations in fact, while cost-efficiency through Cost-Benefit Analysis (CBA) is crucial, it shouldn't be the sole criterion. CBA aids efficient fund allocation for more resilient development and should be integral to decision-making for a prevention-focused culture, potentially yielding substantial intangible benefits \cite{Kreimer2010}.

After discussing why we focus on the affected people and their volatility, the next step involves the modeling.  Various stochastic models have been developed to understand sudden shifts in mortality rates (often to price catastrophic bonds). These models differ in how they represent mortality jumps' characteristics. For instance, Cox et al. \cite{Cox2006} combined geometric Brownian motion and compound Poisson processes to model age-adjusted rates, while Chen and Cox \cite{Chen2009} used a normal distribution for jump severity. Chen and Cummins \cite{Chen2010} integrated two types of jumps, and Deng et al. \cite{Deng2012} explored a double-exponential jump process. Liu and Li \cite{Liu2015} focused on the age-related impact of mortality jumps.

These models assume either annual mortality jumps or utilize a Poisson process for jump frequencies. However, due to the rarity and significance of such events, predicting the timing and frequency of future catastrophic events, and hence mortality jumps, remains uncertain \cite{Chen2009}. Historical data, though, can offer insights despite the Poisson process's limitation stemming from the memoryless nature of the exponential distribution. For the reasons mentioned above, we introduce a time-inhomogeneous skew Brownian motion model in which parameters vary with time, and past occurrences are incorporated through volatility. This design allows the model to account for data skewness. Furthermore, we provide the proof of the existence and uniqueness of the solution. \\


This article is divided in the following parts. Section \ref{Sec:Data} describes the data, offers visualizations, and presents the statistical characteristics. Section \ref{Sec:Model} introduces the time-inhomogeneous geometric skew Brownian motion model in terms of a stochastic differential equation and offers proof of the existence and uniqueness of the solution. Section \ref{Sec:Results} displays the empirical results, while the final Section \ref{Sec:Conclusion} provides the conclusion.

\section{Data} \label{Sec:Data}

The Emergency Events Database (EM-DAT), maintained by the Centre for Research on the Epidemiology of Disasters (CRED), is the primary data source for natural disasters \cite{Cred2023}. Alternative databases like NatCatSERVICE and Sigma from reinsurance companies Munich Re and Swiss Re are less commonly used due to limited accessibility. The intensity measures from EM-DAT tend to correlate with GDP per capita, a central variable in research, as losses are more substantial and accurately documented in developed countries \cite{Botzen2019}.
"Our World in Data" \cite{Ritchie2022} offers visualizations and condensed content derived from EM-DAT data.
Figure \ref{Fig:AnnualDeaths} displays the decadal average of the annual number of deaths resulting from various disasters highlighting the downtrend. Figure \ref{Fig:NumberDeathsMap} depicts a map illustrating the number of deaths from natural disasters in 2022, emphasizing that the Indian subcontinent, Pakistan, and Afghanistan are the most high-risk regions. Additionally, Figure \ref{Fig:NatCatClass} portrays the total number of people affected by natural disasters from 1900 to 2023, revealing a pattern of mean reversion. Lastly, Table \ref{T:StatChar} highlights the high volatility, tailedness, and leptokurtic nature of the analyzed time series.

\begin{figure}[!ht]
	\centering
	\includegraphics[width=.65\textwidth]{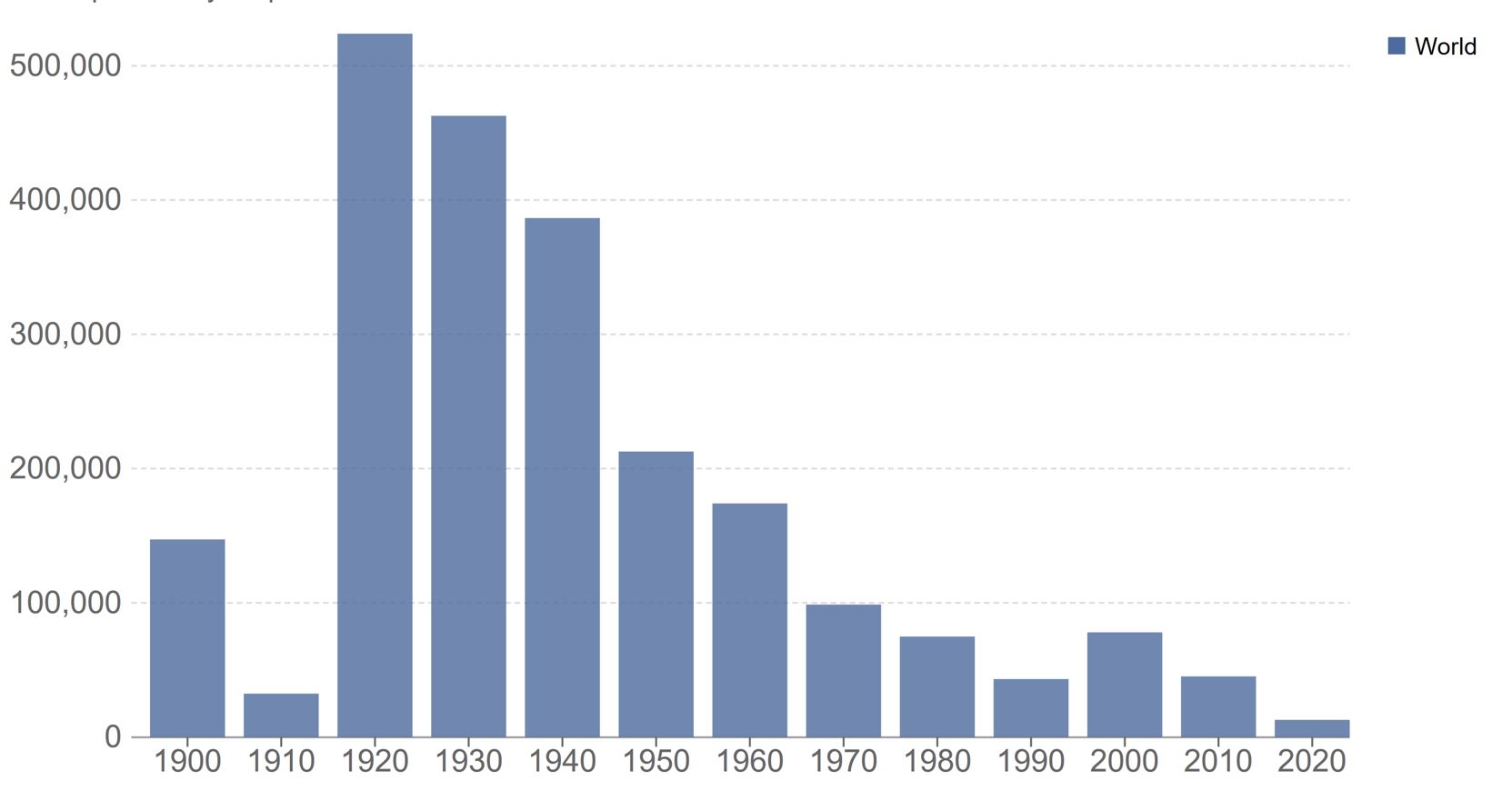}
	\caption{Geophysical, meteorological and climate events including earthquakes, volcanic activity, landslides, drought, wildfires, storms, and flooding. Decadal figures are measured as the annual average over the subsequent ten-year period i.e. figures for '1900' indicate the average from 1900 to 1909, and '1910' represents the average from 1910 to 1919, and so on. Source \cite{Ritchie2022, Cred2023}}
	\label{Fig:AnnualDeaths}
\end{figure}

\begin{figure}[!ht]
	\centering
	\includegraphics[width=.65\textwidth]{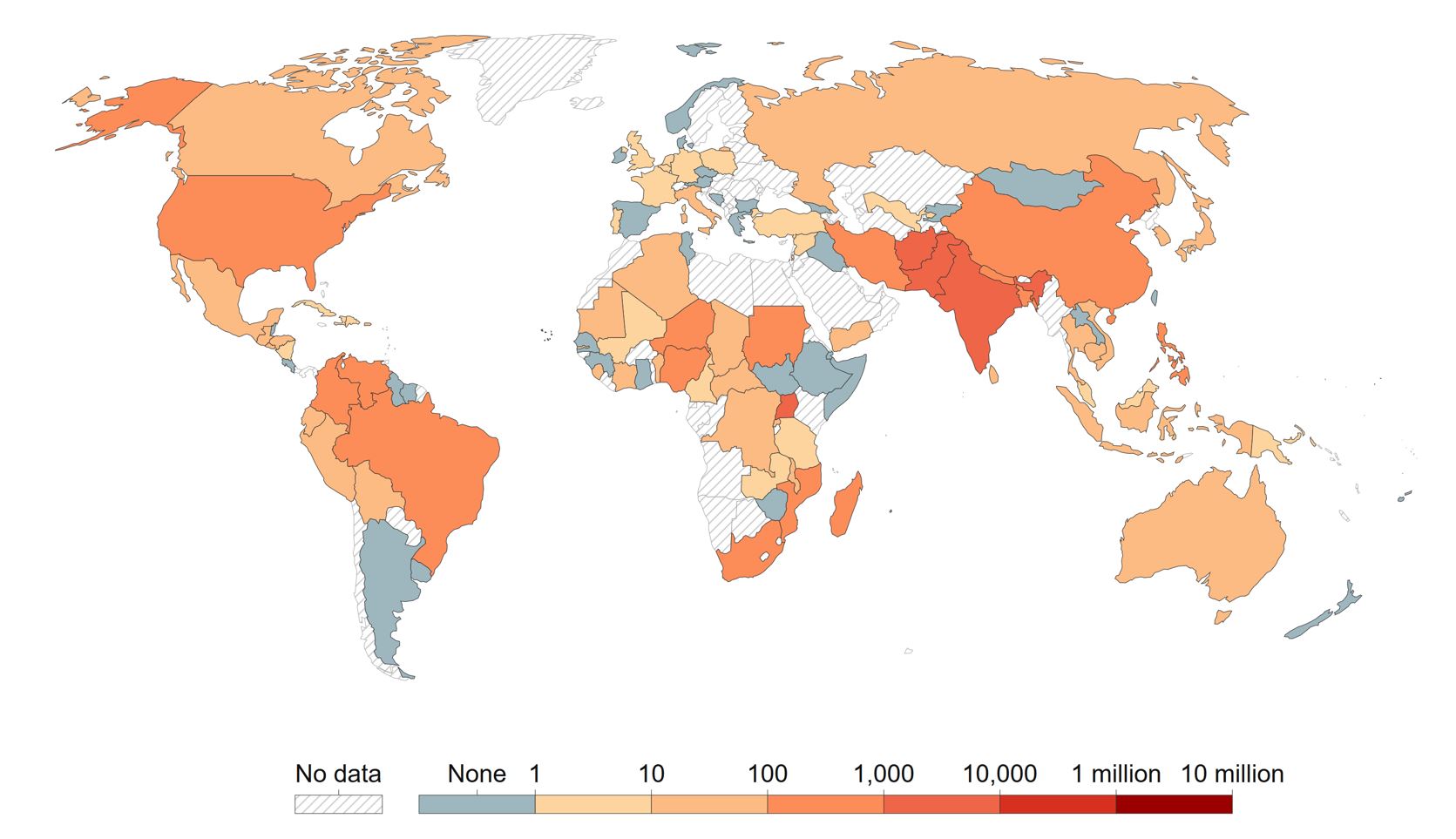}
	\caption{Number of deaths from natural disasters in 2022. Disasters include all geophysical, meteorological and climate events including earthquakes,volcanic activity, landslides, drought, wildfires, storms, and flooding. Source \cite{Ritchie2022, Cred2023}}
	\label{Fig:NumberDeathsMap}
\end{figure}

\begin{figure}[!ht]
	\centering
	\includegraphics[width=.95\textwidth]{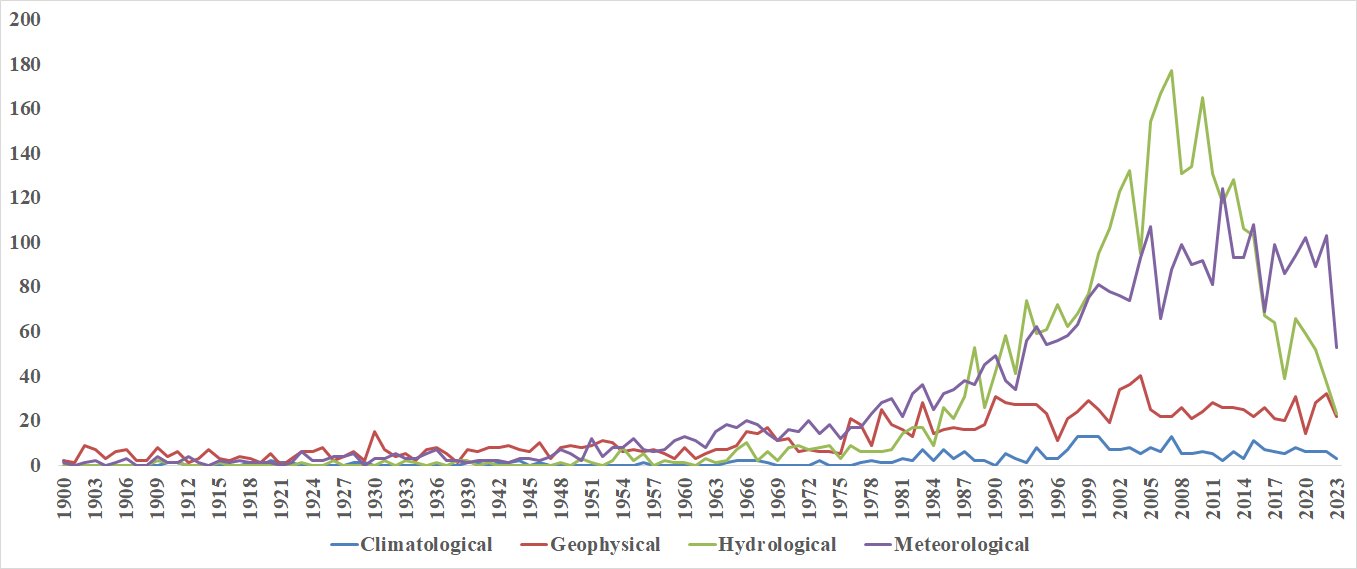}
	\caption{The total number of people affected includes deaths, injuries, those requiring assistance, and individuals displaced from their homes due to natural disasters over the period spanning from 1900 to 2023. Source \cite{Cred2023}}
	\label{Fig:NatCatClass}
\end{figure}

\begin{table}[!ht]
\centering
\adjustbox{max width=.5\textwidth}{%
\begin{tabular}{@{}clll@{}}
\toprule
\multicolumn{4}{c}{Total People Affected Time Series}     \\ \midrule
Mean        & \multicolumn{1}{c}{Std. Dev.} & \multicolumn{1}{c}{Skewness} & \multicolumn{1}{c}{Kurtosis} \\
\multicolumn{1}{l}{223,664.919} & 561,471.353    & 4.102 & 18.495    \\ \bottomrule
\end{tabular}}
\caption{Total number of people affected includes deaths, injuries, those requiring assistance, and individuals displaced from their homes due to natural disasters over the period spanning from 1900 to 2023. Source \cite{Cred2023} } \label{T:StatChar} 
\end{table}

\begin{figure}[!htbp]
	\centering
\includegraphics[width=1\textwidth]{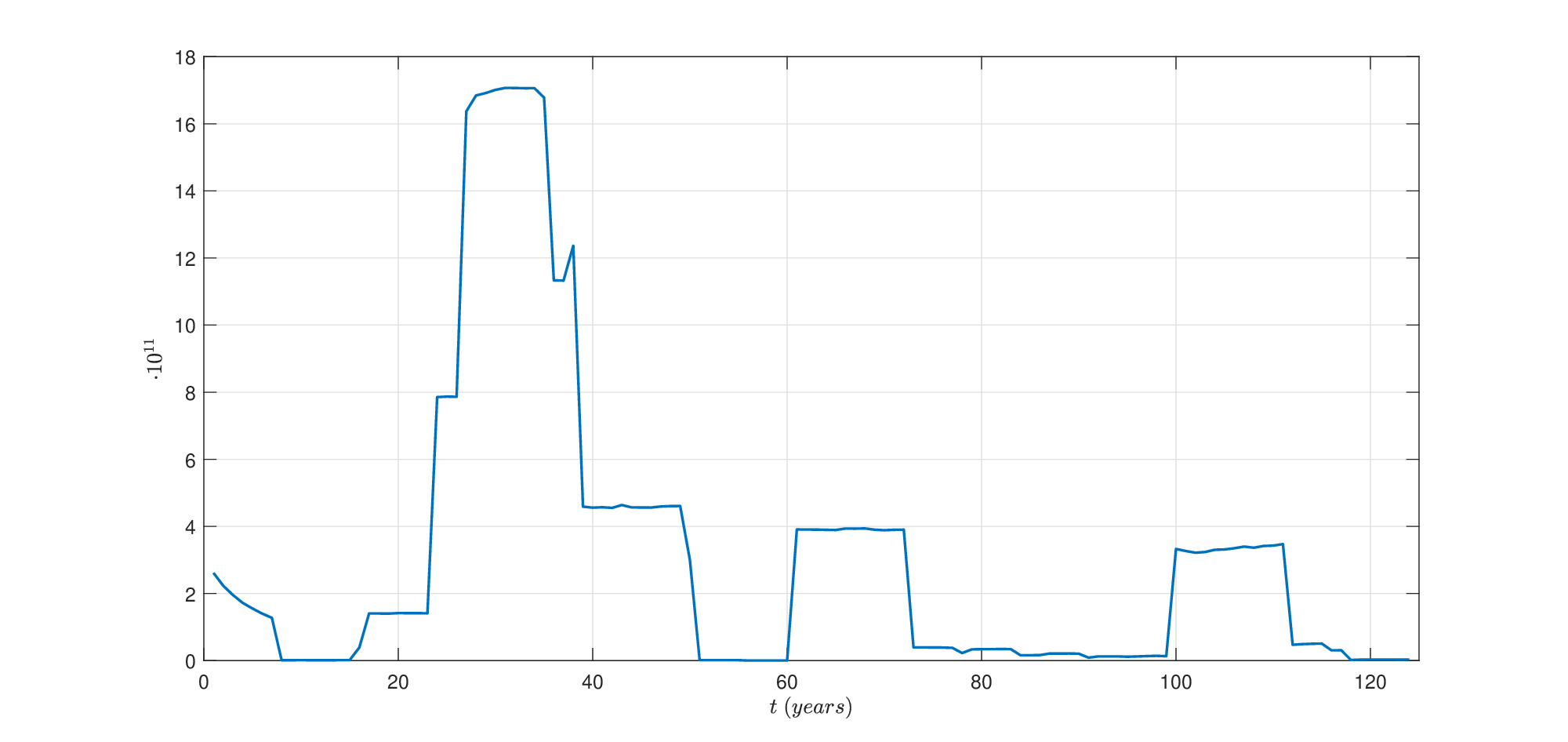}
	\caption{Moving variance $(V_u)_{u\in[1,N]}$ of total affected, with $N=124$ years (1898-2022).}
		\label{F:data}
\end{figure}

\section{Stochastic modeling} \label{Sec:Model}

Throughout the paper, we will make use of the following notation. We denote $\R_0^+:=[0,+\infty)$, $\R^+:=(0,+\infty)$ and by $C(\R_0^+;I)$ the space of continuous functions $f:\R_0^+ \to I\subset \R$. Furthermore, $\mathcal{M}(\R_0^+;I)$ will be the space of Borel-measurable functions $f:\R_0^+ \to I \subset \R$ and $\mathcal{M}(\R_0^+):=\mathcal{M}(\R_0^+;\R)$. Since the notation $L^\infty_{\rm loc}(\R_0^+)$ is usually adopted to denote Lebesgue-measurable functions that are locally bounded, we underline the requirement of Borel-measurability by writing $f \in \mathcal{M}(\R_0^+) \cap L^\infty_{\rm loc}(\R_0^+)$. The same will hold for all the Lebesgue spaces. By $\mathbf{1}_{I}$ we denote the indicator function of $I \subseteq \R$. We fix a probability space $(\Omega, \Sigma, \bP)$ supporting all the involved processes and random variables. By $W:=\{W_t,\ t \ge 0\}$ we denote a standard Brownian motion on it. For any continuous semimartingale $X=\{X_t,\ t \ge 0\}$ we denote by $[X,X]_t$ the quadratic variation process.

\subsection{The time-inhomogeneous skew Brownian motion and related stochastic differential equations}
We recall here the definition of time-inhomogeneous skew Brownian motion.

\begin{Definition}
	Let $\alpha \in \mathcal{M}(\R_0^+;[0,1])$ and $x \in \R$. A time-inhomogeneous skew Brownian motion (starting from $x$, with shape parameter $\alpha$) is a semimartingale $X^\alpha$ satisfying the following stochastic differential equation (SDE)
	\begin{equation}\label{eq:SDEskew}
		X^\alpha_t=x+W_t+\int_0^t(2\alpha(s)-1)dL_s(X^\alpha)
	\end{equation} 
	where $L(X^\alpha):=\{L_t(X^\alpha), \ t \ge 0\}$ is the symmetric local time in $0$ of $X^\alpha$, i.e.
	\begin{equation*}
		L_t(X^\alpha):=\lim_{\varepsilon \to 0}\frac{1}{2\varepsilon}\int_0^t\mathbf{1}_{[-\varepsilon,\varepsilon]}(X^\alpha_s)d[X^\alpha,X^\alpha]_s
	\end{equation*}
\end{Definition}

In \cite{weinryb1983etude}, the author proved pathwise uniqueness of any solution of 
\begin{equation}\label{eq:SDEskewplus}
	{}^+X^\alpha_t=x+W_t+\int_0^t(2\alpha(s)-1)dL^+_s({}^+X^\alpha)
\end{equation} 
where $L^+_s({}^+X^\alpha)$ is the right local time in $0$ of ${}^+X^\alpha$, i.e.
\begin{equation*}
	L^+_t({}^{+}X^\alpha):=\lim_{\varepsilon \to 0}\frac{1}{\varepsilon}\int_0^t\mathbf{1}_{[0,\varepsilon]}({}^{+}X^\alpha_s)d[{}^{+}X^\alpha,{}^{+}X^\alpha]_s
\end{equation*}
under the assumption $\alpha \in \mathcal{M}\left(\R_0^+; \left(-\infty,\frac{3}{4}\right]\right)$ (see also \cite[Exercise VI.2.24]{revuz2013continuous} for further details). However, as observed in \cite[Remark 2.3]{etore2012existence} (see also \cite[Therorem 1.1]{bouhadou2013time}), one has, with the same strategy, pathwise uniqueness of the solution of \eqref{eq:SDEskew} for any $\alpha \in \mathcal{M}(\R_0^+;[0,1])$. The actual existence and uniqueness result for strong solutions of \eqref{eq:SDEskew} is eventually shown in \cite[Theorem 2.13]{etore2012existence}. Recall that a further characterization of local time has been given in \cite[Theorem 22.1]{kallenberg1997foundations}, which leads to the a.e. inequality $L_t(X^\alpha) \le t$.
\begin{remark}
	In the special case $x=0$ and $\alpha(s)\equiv \alpha \in [0,1]$, we get the It\^o-McKean skew Brownian motion with shape parameter $\alpha$ (see \cite{harrison1981skew}). In particular, if we consider two independent Brownian motions $W^{(1)},W^{(2)}$, then the process
	\begin{equation}\label{eq:AzzalinisBm}
		X^\alpha_t=2\sqrt{\alpha(1-\alpha)}W^{(1)}_t+(2\alpha-1)|W^{(2)}_t|
	\end{equation} 
	is a It\^{o}-McKean skew Brownian motion (see \cite[Proposition 2.1]{corns2007skew}). The process \eqref{eq:AzzalinisBm} is usually referred as the Azzalini construction of the skew Brownian motion, as it can be easily shown to admit density belonging to the class of skew normal distributions introduced in \cite{azzalini1985class}. It is interesting to observe that if $\alpha=\frac{1}{2}$ we get the standard Brownian motion, if $\alpha=1$ we obtain the reflected Brownian motion and if $\alpha=0$ we get a negatively reflected Brownian motion. Hence, in some sense $X^\alpha$ interpolates between the negatively ad the positively reflected Brownian motion.
\end{remark}

Let us also recall that, as proved in \cite[Section 5]{etore2012existence}, $X^\alpha$ is a time-inhomogeneous Markov process whose transition density function is given by
\begin{align}\label{eq:conddist}
\begin{split}
	p_\alpha(t,y\mid &s,x):=\P(X^\alpha_t \in dy \mid X^\alpha_s=x)\\
	&=\int_0^{t-s}\frac{|y|(1+(2\alpha(u+s)-1){\sf sign}(y))}{2\pi(t-s-u)^{\frac{3}{2}}\sqrt{u}}\exp\left(-\frac{y^2}{2(t-s-u)}-\frac{x^2}{2u}\right)\,du\\
	&+\frac{1}{\sqrt{2\pi(t-s)}}\left[\exp\left(-\frac{(y-x)^2}{2(t-s)}\right)-\exp\left(-\frac{(y+x)^2}{2(t-s)}\right)\right]\mathbf{1}_{\R^+}(xy),
\end{split}
\end{align}
where $x,y \in \R$ and $0 \le s<t$. The joint density of $(X_t^\alpha,L_t(X^\alpha))$ is given in \cite[Theorem 2.14]{etore2012existence}. If we define
\begin{equation*}
	A^\alpha_t:=x+\int_0^t (2\alpha(s)-1)dL_s(X^\alpha),
\end{equation*}
as a direct consequence of \cite[Proposition IV.2.10]{revuz2013continuous} we know that $A^\alpha$ has locally bounded variation and then the decomposition of $X^\alpha$ as a semimartingale is given by
\begin{equation*}
	X^\alpha_t=W_t+A^\alpha_t,
\end{equation*}
which in turn implies $[X^\alpha,X^\alpha]_t=t$. Once the decomposition of $X^\alpha$ as a semimartingale is clear, one can easily define the space of stochastic integrands with respect to the time-inhomogeneous skew Brownian motion, according to \cite[Definitions IV.2.6, IV.2.8 and IV.2.9]{revuz2013continuous}.
\begin{Definition}
Let $Y=\{Y_t, t \ge 0\}$ be a progressively measurable stochastic process. We say that $Y$ is locally bounded if there exist an increasing sequence of stopping times $\{T_n\}_{n \in \NN}$ such that $T_n \to +\infty$ a.s. and a sequence of positive constants $\{C_n\}_{n \in \NN}$ such that for any $n \in \NN$ and $t \ge 0$ it holds
	\begin{equation*}
		|Y_{t \wedge T_n}|\le C_n \mbox{ a.s.}
	\end{equation*}
	We say that a process $Y=\{Y_t, t \in [0,T]\} \in M^2_T(\Omega)$ if $Y$ is progressively measurable, a.s. bounded and 
	\begin{equation*}
		\int_0^T \E[Y_t^2]dt<\infty.
	\end{equation*}
	For any process $Y \in M^2_T(\Omega)$ we can define
	\begin{equation*}
		\int_0^T Y_tdX^\alpha_t=\int_0^TY_tdW_t+\int_0^T(2\alpha(t)-1)Y_tdL_t(X^\alpha).
	\end{equation*}
	Finally, we say that $Y=\{Y_t, t \ge 0\} \in M^2_{\rm loc}(\Omega)$ if $Y$ is locally bounded and its restriction on $[0,T]$ belongs to $M^2_T(\Omega)$ for any $T>0$. 
\end{Definition}
Once this is done, it is not difficult to define stochastic differential equations driven by $X^\alpha$.
\begin{Definition}
	Let $T>0$, $b,\sigma:[0,T]\times \R \times \Omega \to \R$ be measurable functions, $\overline{Y}$ be a random variable and $\alpha \in \mathcal{M}(\R_0^+;[0,1])$. We say that a process $Y \in M^2_T(\Omega)$ is a strong solution of the stochastic differential equation (SDE)
	\begin{equation}\label{SDEproto}
		\begin{cases}
			dY_t=b(t,Y_t)dt+\sigma(t,Y_t)dX^\alpha_t & t \in [0,T]\\
			Y_0=\overline{Y}
		\end{cases}
	\end{equation}
	if
	\begin{itemize}
		\item[$(i)$] The stochastic process $t \in [0,T] \mapsto b(t,Y_t) \in \R$ belongs to $L^1[0,T]$ a.s.
		\item[$(ii)$] The stochastic process $t \in [0,T] \mapsto \sigma(t,Y_t) \in \R$ belongs to $M^2_T(\Omega)$
		\item[$(iii)$] It holds
		\begin{equation*}
			Y_t=\overline{Y}+\int_0^tb(s,Y_s)ds+\int_0^t\sigma(s,Y_s)dX^\alpha_s.
		\end{equation*}
	\end{itemize}
	We say that pathwise uniqueness holds for \eqref{SDEproto} if for any two strong solutions $Y^1,Y^2$ we have
	\begin{equation*}
		\P(Y^1(t)=Y^2(t), \ \forall t \in [0,T])=1.
	\end{equation*}
	If $T=\infty$, we ask $Y \in M^2_{\rm loc}(\Omega)$.
\end{Definition}

\subsection{The time-inhomogeneous geometric-skew Brownian motion (GSBM)}
Now let $\mu,\sigma \in \mathcal{M}(\R_0^+)$ such that $\mu,\sigma \in L^\infty_{\rm loc}(\R_0^+)$ and consider a random variable $\overline{G}^\alpha \in L^p(\Omega)$ for some $p>2$ and such that $\bP(\overline{G}^\alpha>0)=1$. We want to focus on the following linear SDE
\begin{equation}\label{eq:geomSDE}
	\begin{cases}
	dG^\alpha_t=\mu(t)G^\alpha_tdt+\sigma(t)G^\alpha_tdX^\alpha_t & t \ge 0\\
	G^\alpha_0=\overline{G}^\alpha
	\end{cases}
\end{equation}
Namely, we now prove the following existence and uniqueness theorem.

\begin{Theorem}
	The pathwise unique strong solution of \eqref{eq:geomSDE} is given by
	\begin{equation}\label{eq:geomskew}
		G_t^\alpha=\overline{G}^\alpha\exp\left(\int_0^t\left(\mu(s)-\frac{\sigma^2(s)}{2}\right)ds+\int_0^t\sigma(s)dX^\alpha_s\right)
	\end{equation}
\end{Theorem}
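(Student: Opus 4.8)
The plan is to verify by It\^o calculus that the process in \eqref{eq:geomskew} is a strong solution of \eqref{eq:geomSDE}, and then to obtain pathwise uniqueness through the classical integrating-factor argument by showing that \emph{any} strong solution must coincide with it. For the existence part, set $Z_t:=\int_0^t\bigl(\mu(s)-\tfrac{\sigma^2(s)}{2}\bigr)ds+\int_0^t\sigma(s)\,dX^\alpha_s$, so the candidate reads $G^\alpha_t=\overline{G}^\alpha e^{Z_t}$. Since $\mu,\sigma\in L^\infty_{\rm loc}(\R_0^+)$, the deterministic integrand $\sigma$ belongs to $M^2_T(\Omega)$ for every $T>0$, so $Z$ is a well-defined continuous semimartingale whose martingale part is $\int_0^\cdot\sigma(s)\,dW_s$ and, because $[X^\alpha,X^\alpha]_t=t$, whose bracket is $[Z,Z]_t=\int_0^t\sigma^2(s)\,ds$. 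Applying It\^o's formula (\cite[Theorem~IV.3.3]{revuz2013continuous}) to $z\mapsto e^{z}$ and multiplying through by the fixed random variable $\overline{G}^\alpha$, the second-order term $\tfrac12 G^\alpha_t\sigma^2(t)\,dt$ cancels precisely the $-\tfrac{\sigma^2(t)}{2}$ hidden in the drift of $Z$, which leaves $dG^\alpha_t=\mu(t)G^\alpha_t\,dt+\sigma(t)G^\alpha_t\,dX^\alpha_t$ and $G^\alpha_0=\overline{G}^\alpha$. It then remains to check admissibility in the sense of the definition of strong solution: progressive measurability and a.s.\ boundedness on each $[0,T]$ are immediate from path-continuity, while conditions $(i)$, $(ii)$ and the membership $G^\alpha\in M^2_T(\Omega)$ all reduce to the bound $\int_0^T\E[(G^\alpha_t)^2]\,dt<\infty$ (with an extra $\sigma^2(t)$ weight for $(ii)$); for $T=\infty$ one works on each $[0,T]$ and uses continuity to get local boundedness.

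This second-moment estimate is the technical heart of the proof. On $[0,T]$ I would split $2Z_t=2\int_0^t\bigl(\mu-\tfrac{\sigma^2}{2}\bigr)ds+2\int_0^t\sigma(s)(2\alpha(s)-1)\,dL_s(X^\alpha)+2\int_0^t\sigma(s)\,dW_s$. The first summand is bounded by a constant depending only on $T$, $\|\mu\|_{L^\infty[0,T]}$ and $\|\sigma\|_{L^\infty[0,T]}$; the second is bounded in modulus by $\|\sigma\|_{L^\infty[0,T]}\,L_t(X^\alpha)\le\|\sigma\|_{L^\infty[0,T]}\,t$, using $|2\alpha(s)-1|\le1$ together with the a.e.\ inequality $L_t(X^\alpha)\le t$ recalled above; and the last summand is twice a Wiener integral of a bounded deterministic function, hence a centred Gaussian with variance $\le\|\sigma\|^2_{L^\infty[0,T]}T$, so its exponential has finite moments of every order. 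Combining these estimates with $\overline{G}^\alpha\in L^p(\Omega)$ for some $p>2$ and H\"older's inequality with exponents $\tfrac p2$ and $\tfrac p{p-2}$ gives $\sup_{t\in[0,T]}\E[(G^\alpha_t)^2]<\infty$, whence the claimed integrability; this is also where both the skew structure of $X^\alpha$ and the bound $L_t(X^\alpha)\le t$ are genuinely needed.

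For uniqueness, let $Y$ be an arbitrary strong solution of \eqref{eq:geomSDE} and put $H_t:=e^{-Z_t}$. It\^o's formula yields $dH_t=H_t\bigl(\sigma^2(t)-\mu(t)\bigr)dt-H_t\sigma(t)\,dX^\alpha_t$, and since the martingale parts of $Y$ and $H$ are $\int_0^\cdot\sigma(s)Y_s\,dW_s$ and $-\int_0^\cdot\sigma(s)H_s\,dW_s$, one gets $[Y,H]_t=-\int_0^t\sigma^2(s)Y_sH_s\,ds$. Plugging these into the integration-by-parts formula (\cite[Theorem~IV.3.1]{revuz2013continuous}) --- after a routine localization using the continuity of $Y$ and $H$ to make all integrands bounded --- the $dt$-terms and the $dX^\alpha$-terms cancel identically, so $d(Y_tH_t)=0$ and hence $Y_tH_t=Y_0H_0=\overline{G}^\alpha$ for all $t$, a.s.; since $H_t=e^{-Z_t}>0$, this forces $Y_t=\overline{G}^\alpha e^{Z_t}=G^\alpha_t$ for all $t$, a.s., which is precisely pathwise uniqueness and re-identifies \eqref{eq:geomskew} as the unique strong solution. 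I expect the only genuinely delicate point to be the second-moment estimate; the remainder is the standard linear-SDE computation, which goes through unchanged here exactly because $[X^\alpha,X^\alpha]_t=t$ leaves It\^o's formula for $X^\alpha$ free of any additional local-time term in its second-order part.
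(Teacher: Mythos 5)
Your proof is correct, and the existence half (It\^o's formula applied to $e^{Z_t}$ after checking $[Z,Z]_t=\int_0^t\sigma^2(s)\,ds$, plus the second-moment estimate combining $\overline{G}^\alpha\in L^p$, the bound $L_t(X^\alpha)\le t$ for the local-time contribution, and the Gaussian moment generating function of $\int_0^t\sigma\,dW$) is essentially identical to the paper's; your use of H\"older with exponents $p/2$ and $p/(p-2)$ is interchangeable with the paper's Young inequality. Where you genuinely diverge is uniqueness. The paper takes an arbitrary strong solution $Y$, proves it is a.s.\ continuous, sets $T_0=\inf\{t:Y_t=0\}$, and rules out $T_0<\infty$ by a contradiction: $\log(Y_t/\overline{G}^\alpha)$ would have to diverge to $-\infty$ as $t\uparrow T_0$, yet It\^o's formula identifies its limit with a finite quantity; only then can it take logarithms globally and read off $Y=G^\alpha$. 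You instead run the classical integrating-factor argument: with $H_t=e^{-Z_t}$, integration by parts (after localization) gives $d(Y_tH_t)=0$ because the $dt$-, $dW$- and $dL$-contributions cancel term by term and $[Y,H]_t=-\int_0^t\sigma^2 Y_sH_s\,ds$ absorbs the second-order correction, whence $Y_t=\overline{G}^\alpha e^{Z_t}$. Your route buys something concrete: since the product $Y_tH_t$ is defined whether or not $Y$ vanishes, you never need to establish a priori that a solution stays away from zero, so the paper's $T_0$-contradiction step disappears entirely; positivity of $Y$ comes out as a conclusion rather than an intermediate lemma. The one step you compress --- the a.s.\ continuity of an arbitrary strong solution $Y$, needed before you may localize and apply integration by parts --- is exactly the point the paper spells out (continuity of the Lebesgue term, of the local-time integral via $L(X^\alpha)\le$ a continuous version, and of the Wiener integral as a continuous martingale), and it does follow from item $(iii)$ of the definition of strong solution, so your ``routine localization'' remark is justified but deserves that one supporting sentence.
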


\begin{proof}
    It is not difficult to check, since $\overline{G}^\alpha \in L^p(\Omega)$ for $p>2$, that $G^\alpha \in M^2_{\rm loc}(\Omega)$. Indeed, first observe that $G^\alpha$ is clearly locally bounded. Furthermore, recalling that $L_t(X^\alpha_t) \le t$ and using Young's inequality, we have for any $T>0$ and $t \in [0,T]$
\begin{align*}
    \E&\left[\left(\overline{G}^\alpha\exp\left(\int_0^t\left(\mu(s)-\frac{\sigma^2(s)}{2}\right)ds+\int_0^t\sigma(s)dX^\alpha_s\right)\right)^2\right]\\
    &\le \frac{2}{p}\E\left[\left(\overline{G}^\alpha\right)^p\right]+ \frac{p-2}{p}\E\left[\exp\left(\frac{p}{p-2}\int_0^t\left(\mu(s)-\frac{\sigma^2(s)}{2}\right)ds+\frac{p}{p-2}\int_0^t\sigma(s)dX^\alpha_s\right)\right]\\
    &\le \frac{2}{p}\E\left[\left(\overline{G}^\alpha\right)^p\right]+ \frac{p-2}{p}\exp\left(\frac{p}{p-2}\int_0^t\left(\mu(s)-\frac{\sigma^2(s)}{2}\right)ds\right)\\
    &\quad \times \E\left[\exp\left(\frac{p}{p-2}\int_0^t\sigma(s)(2\alpha(s)-1)dL_s(X^\alpha)+\frac{p}{p-2}\int_0^t\sigma(s)dW_s\right)\right]\\
    &\le \frac{2}{p}\E\left[\left(\overline{G}^\alpha\right)^p\right]+ \frac{p-2}{p}\exp\left(\frac{p}{p-2}\int_0^t\left(\mu(s)-\frac{\sigma^2(s)}{2}\right)ds\right)\\
    &\quad \times \E\left[\exp\left(\frac{p}{p-2}\left\Vert\sigma \right\Vert_{L^\infty(0,T)}(2\left\Vert\alpha\right\Vert_{L^\infty(0,T)}+1)L_t+\frac{p}{p-2}\int_0^t\sigma(s)dW_s\right)\right]\\
    &\le \frac{2}{p}\E\left[\left(\overline{G}^\alpha\right)^p\right]+ \frac{p-2}{p}\exp\left(\frac{p}{p-2}\int_0^t\left(\mu(s)-\frac{\sigma^2(s)}{2}\right)ds\right)\\
    &\quad \times\exp\left(\frac{p}{p-2}\left\Vert\sigma \right\Vert_{L^\infty(0,T)}(2\left\Vert\alpha \right\Vert_{L^\infty(0,T)}+1)T\right) \E\left[\exp\left(\frac{p}{p-2}\int_0^t\sigma(s)dW_s\right)\right].
 \end{align*}
    Next, we recall that $\int_0^t \sigma(s)dW_s$ is a centered Gaussian random variable with variance $\int_0^t \sigma^2(s)ds$, hence we can evaluate its moment generating function obtaining
    \begin{align*}
    \E&\left[\left(\overline{G}^\alpha\exp\left(\int_0^t\left(\mu(s)-\frac{\sigma^2(s)}{2}\right)ds+\int_0^t\sigma(s)dX^\alpha_s\right)\right)^2\right]\\
    &\le \frac{2}{p}\E\left[\left(\overline{G}^\alpha\right)^p\right]+ \frac{p-2}{p}\exp\left(\frac{p}{p-2}\int_0^t\left(\mu(s)-\frac{\sigma^2(s)}{2}\right)ds\right)\\
    &\quad \times \exp\left(\frac{p}{p-2}\left\Vert\sigma \right\Vert_{L^\infty(0,T)}(2\left\Vert\alpha \right\Vert_{L^\infty(0,T)}+1)T+\frac{1}{2}\left(\frac{p}{p-2}\right)^2\int_0^t\sigma^2(s)ds\right),
    \end{align*}
    where the right-hand side is clearly integrable in $[0,T]$. Since $T>0$ is arbitrary, we get that $G^\alpha \in M^2_{\rm loc}(\Omega)$.
    
    Let us now show that \eqref{eq:geomskew} solves \eqref{eq:geomSDE}. Upon substituting $G^\alpha_t$ with $\frac{G^\alpha_t}{\overline{G}^\alpha}$, we can assume $\overline{G}^\alpha \equiv 1$. Define the process
	\begin{equation*}
		Y^\alpha_t=\int_0^t\sigma(s)dX^\alpha_s.
	\end{equation*}
	This is well-defined since $\sigma \in L^\infty_{\rm loc}(\R_0^+)\subset M^2_t(\Omega)$ for any $t \ge 0$. Furthermore, it is a semimartingale. Indeed, if we define
	\begin{equation*}
		A^{\alpha,\sigma}_t=\int_0^t\sigma(s)(2\alpha(s)-1)dL_s(X^\alpha),
	\end{equation*}
	by \cite[Proposition IV.2.10]{revuz2013continuous} we know that $A^{\alpha,\sigma}$ has locally bounded variation and then $Y^\alpha$ can be decomposed as
	\begin{equation*}
		Y^\alpha_t=\int_0^t\sigma(s)dW_s+A^{\alpha,\sigma}_t.
	\end{equation*}
	Furthermore, it is clear that
	\begin{equation*}
		dY^\alpha_t=\sigma(t)dW_t+\sigma(t)(2\alpha(t)-1)dL_t(X^\alpha)=\sigma(t)dX^\alpha_t
	\end{equation*}
	and
	\begin{equation*}
		d[Y^\alpha,Y^\alpha]_t=\sigma^2(t)dt.
	\end{equation*}
	Next, define
	\begin{equation*}
		Z^\alpha_t=Y^\alpha_t+\int_0^t\left(\mu(s)-\frac{\sigma^2(s)}{2}\right)ds
	\end{equation*}
	which is still a semi-martingale, since we are adding to $Y^\alpha$ a function of locally bounded variation (precisely, an absolutely continuous function), with $d[Z^\alpha,Z^\alpha]_t=\sigma^2(t)dt$. By definition, we have $G^\alpha_t=\exp(Z^\alpha_t)$, hence, by It\^{o}'s formula (see \cite[Theorem IV.3.3]{revuz2013continuous})
	\begin{align*}
		dG^\alpha_t&=G^\alpha_tdZ^\alpha_t+\frac{1}{2}G^\alpha_td[Z^\alpha,Z^\alpha]_t\\
		&=G^\alpha_t\left(\mu(t)-\frac{\sigma^2(t)}{2}\right)dt+G^\alpha_t\sigma(t)dX^\alpha_t+\frac{G^\alpha_t\sigma^2(t)}{2}dt\\
		&=\mu(t)G^\alpha_tdt+\sigma(t)G^\alpha_tdX^\alpha_t.
	\end{align*}
	Now let us show that $G^\alpha$ is the unique solution. To do this, let $Y$ be any other strong solution and fix $T>0$. Since $Y \in M_{\rm loc}^2(\Omega)$, it is locally bounded and there exist a sequence $T_n \to +\infty$ of stopping times and a sequence $C_n>0$ of constants such that $|Y_{t \wedge T_n}| \le C_n$. Fix $T>0$, $\omega \in \Omega$ and let $n=n(\omega)$ such that $T_n(\omega)>T$. It is clear that the function $t \in [0,T]\mapsto \int_0^t \mu(s)Y_s(\omega)ds$ is continuous. Furthermore for $s\le t \le T$
	\begin{align*}
	&\left|\int_0^t \sigma(\tau)(2\alpha(\tau)-1)Y_\tau(\omega)dL_\tau(X^\alpha)(\omega)-\int_0^s \sigma(\tau)(2\alpha(\tau)-1)Y_\tau(\omega)dL_\tau(X^\alpha)(\omega)\right|\\
	&\qquad =\left|\int_s^t \sigma(\tau)(2\alpha(\tau)-1)Y_\tau(\omega)dL_\tau(X^\alpha)(\omega)\right|\\
	&\qquad \le \left\Vert \sigma \right\Vert_{L^\infty[0,T]}(2\left\Vert \alpha \right\Vert_{L^\infty[0,T]}+1)C_n(L_t(X^\alpha)(\omega)-L_s(X^\alpha)(\omega)).
	\end{align*}
	Since $L_t(X^\alpha)$ is the local time of a continuous semimartingale, we can consider a version that is continuous in $t$ and then the previous inequality guarantees that $t \in [0,T] \mapsto \int_0^t \sigma(\tau)(2\alpha(\tau)-1)Y_\tau dL_\tau(X^\alpha)$ is a.s. continuous. Finally, observe that since $\sigma \in L^\infty_{\rm loc}[0,T]$ and $Y \in M^2_{\rm loc}(\Omega)$,  the process $t \in [0,T] \mapsto \int_0^t \sigma(\tau)Y_\tau dW_\tau$ is a continuous martingale (see the discussion in \cite[Section 3.2]{karatzas1991brownian}). Hence $Y$ is a.s. continuous. Let $T_0:=\inf\{t \ge 0: \ Y_t=0\}$, where, clearly, $T_0>0$ a.s. Thus, the process $Z_t=\log\left(\frac{Y_t}{\overline{G}^\alpha}\right)$ is well-defined for $t < T_0$ and $Z_t \to -\infty$ as $t \to T_0$ a.s. Now assume by contradiction that $\bP(T_0<\infty)>0$ and consider $\omega \in \{T_0<\infty\}$. Then, by It\^{o}'s formula (omitting the dependence on $\omega$ for the ease of the reader),
	\begin{equation*}
		Z_{t \wedge T_0}=\int_0^{t \wedge T_0}\left(\mu(s)-\frac{\sigma^2(s)}{2}\right)ds+\int_0^{t \wedge T_0}\sigma(s)dX^\alpha_s.
	\end{equation*}
	and taking the limit as $t \uparrow T_0$ we have
	\begin{equation}
		\lim_{t \to T_0}Z_t=\int_0^{T_0}\left(\mu(s)-\frac{\sigma^2(s)}{2}\right)ds+\int_0^{T_0}\sigma(s)dX^\alpha_s
	\end{equation}
	which is absurd since the left-hand side is finite. Hence $\bP(T_0=\infty)=1$ and we have for any $t \ge 0$
	\begin{equation*}
		Z_{t}=\int_0^{t}\left(\mu(s)-\frac{\sigma^2(s)}{2}\right)ds+\int_0^{t}\sigma(s)dX^\alpha_s
	\end{equation*}
	which in turn implies that $Y_t=G^\alpha_t$ for any $t \ge 0$.
\end{proof}
\subsection{Approximation of $G^\alpha_t$ with piecewise constant parameters}
From now on, without loss of generality, we assume that $X^\alpha_0=0$.

For the forecasting procedure, we will approximate the functional parameters $\mu,\sigma,\alpha$ of the process with piecewise constants functions. To do this, let us denote by $\mathcal{A}$ the class of functions $f \in \mathcal{M}(\R_0^+)\cap L^{\infty}_{\rm loc}(\R_0^+)$ with the following property: for any $T>0$ and any sequence of partitions $\Pi_n:t^n_0=0<t^n_1<\cdots<t^n_n=T$ with ${\sf diam}(\Pi_n):=\max_{i=1,\dots,n}|t^n_i-t^n_{i-1}| \to 0$ as $n \to +\infty$, there exist two sequences of c\`{a}dl\`{a}g functions $\underline{f}_n,\overline{f}_n:[0,T] \to \R$, $n=1,2,\dots$ that are constant on each interval $[t^n_i,t^n_{i+1})$, $i=0,\dots,n-1$, and satisfy $\underline{f}_n(t)\le f(t) \le \overline{f}_n(t)$ and $\lim_{n \to +\infty}\underline{f}_n(t)=\lim_{n \to +\infty}\overline{f}_n(t)=f(t)$ for all $t \in [0,T]$. This condition has been considered for instance in \cite{etore2012existence} only in the interval $[0,1]$, where it is called Condition $\mathcal{H}$, and it is clearly satisfied by any continuous function. If $\alpha \in \mathcal{A}$, then, by a simple adaptation of the proof of \cite[Theorem 7.4]{etore2012existence} to the case of the interval $[0,T]$ (in place of just $[0,1]$) and using explicitly \cite[Equation (7.3)]{etore2012existence}, one has for any $T>0$ and any $\underline{\alpha}_n,\overline{\alpha}_n$ as before
\begin{equation}\label{eq:monotoneapprox}
	\lim_{n \to +\infty}\E\left[\sup_{t \in [0,T]}\left|X^{\underline{\alpha}_n}_t-X^{\alpha}_t\right|\right]=\lim_{n \to +\infty}\E\left[\sup_{t \in [0,T]}\left|X^{\overline{\alpha}_n}_t-X^{\alpha}_t\right|\right]=0.
\end{equation}
We can extend the previous result to any piecewise constant approximation of $\alpha$.
\begin{prop}
Assume $\alpha \in \mathcal{A}$. For any sequence of partitions $\Pi_n:t^n_0=0<t^n_1<\cdots<t^n_n=T$ with ${\sf diam}(\Pi_n) \to 0$ as $n \to +\infty$ and any sequence of c\`{a}dl\`{a}g functions $\alpha_n:[0,T] \to [0,1]$, where $\alpha_n(t)$ is constant on $[t_i^n,t_{i+1}^n)$, $i=0,\dots,n-1$ and $\lim_{n \to +\infty}\alpha_n(t)=\alpha(t)$ for any $t \in [0,T]$, it holds
\begin{equation}\label{eq:limitalpha}
	\lim_{n \to +\infty}\E\left[\sup_{t \in [0,T]}\left|X^{\alpha_n}_t-X^{\alpha}_t\right|\right]=0.
\end{equation}    
\end{prop}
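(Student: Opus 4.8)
The plan is to deduce \eqref{eq:limitalpha} from the already-available monotone statement \eqref{eq:monotoneapprox} by a squeezing argument built on the monotonicity of $X^\alpha$ in the shape parameter. The crucial auxiliary fact is a \emph{comparison principle} for equation \eqref{eq:SDEskew}: if $\alpha,\beta \in \mathcal{M}(\R_0^+;[0,1])$ satisfy $\alpha(t)\le\beta(t)$ for every $t$, then the corresponding strong solutions of \eqref{eq:SDEskew} (which exist by \cite[Theorem 2.13]{etore2012existence}) driven by the \emph{same} Brownian motion $W$ and started at the same point satisfy $X^\alpha_t \le X^\beta_t$ for all $t\ge 0$, a.s. This is exactly the interpolation picture recalled in the Remark above ($\alpha=1$ yields the positively reflected, $\alpha=0$ the negatively reflected Brownian motion), and it is what lets one work with only the \emph{pointwise} convergence $\alpha_n(t)\to\alpha(t)$, since no uniform control of $\alpha_n-\alpha$ is assumed.

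Granting the comparison principle, fix $T>0$ and let $\Pi_n$, $\alpha_n$ be as in the statement. Since $\alpha\in\mathcal{A}$, choose càdlàg functions $\underline{\alpha}_n,\overline{\alpha}_n$ constant on the intervals of $\Pi_n$ with $\underline{\alpha}_n\le\alpha\le\overline{\alpha}_n$ and $\underline{\alpha}_n(t),\overline{\alpha}_n(t)\to\alpha(t)$; replacing them by $\underline{\alpha}_n\vee 0$ and $\overline{\alpha}_n\wedge 1$ we may assume they are $[0,1]$-valued, which preserves all these properties because $\alpha$ is $[0,1]$-valued. Now set
\begin{equation*}
	\underline{\beta}_n := \alpha_n\wedge\underline{\alpha}_n, \qquad \overline{\beta}_n := \alpha_n\vee\overline{\alpha}_n .
\end{equation*}
These are again $[0,1]$-valued càdlàg functions, constant on the intervals of $\Pi_n$, with $\underline{\beta}_n\le\alpha\le\overline{\beta}_n$, $\underline{\beta}_n\le\alpha_n\le\overline{\beta}_n$, and $\underline{\beta}_n(t)\to\alpha(t)$, $\overline{\beta}_n(t)\to\alpha(t)$ for every $t$ (pointwise minimum/maximum of two sequences converging to $\alpha(t)$). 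In particular $\underline{\beta}_n$ and $\overline{\beta}_n$ are admissible lower/upper approximations of $\alpha$ in the sense for which \eqref{eq:monotoneapprox} is stated.

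Applying the comparison principle four times, for every $t\in[0,T]$ we have $X^{\underline{\beta}_n}_t\le X^{\alpha_n}_t\le X^{\overline{\beta}_n}_t$ and $X^{\underline{\beta}_n}_t\le X^{\alpha}_t\le X^{\overline{\beta}_n}_t$; hence $X^{\alpha_n}_t$ and $X^{\alpha}_t$ both lie in $[X^{\underline{\beta}_n}_t,X^{\overline{\beta}_n}_t]$, so
\begin{equation*}
	\bigl|X^{\alpha_n}_t-X^{\alpha}_t\bigr| \le X^{\overline{\beta}_n}_t-X^{\underline{\beta}_n}_t \le \bigl|X^{\overline{\beta}_n}_t-X^{\alpha}_t\bigr|+\bigl|X^{\alpha}_t-X^{\underline{\beta}_n}_t\bigr| .
\end{equation*}
Taking the supremum over $t\in[0,T]$ and then expectations gives
\begin{equation*}
	\E\Bigl[\sup_{t\in[0,T]}\bigl|X^{\alpha_n}_t-X^{\alpha}_t\bigr|\Bigr] \le \E\Bigl[\sup_{t\in[0,T]}\bigl|X^{\overline{\beta}_n}_t-X^{\alpha}_t\bigr|\Bigr]+\E\Bigl[\sup_{t\in[0,T]}\bigl|X^{\alpha}_t-X^{\underline{\beta}_n}_t\bigr|\Bigr],
\end{equation*}
and both terms on the right vanish as $n\to+\infty$ by \eqref{eq:monotoneapprox}, which is \eqref{eq:limitalpha}.

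The only genuinely delicate point is the comparison principle; once it is available the proof is the short sandwiching computation above. I expect the cleanest route is to invoke a time-inhomogeneous comparison theorem for one-dimensional SDEs with local-time (generalized) drift, or to note that such monotonicity is already implicit in the construction of \cite{etore2012existence}; a self-contained argument would require a Tanaka-type computation on $X^{\beta}_t-X^{\alpha}_t$, using that this difference is of locally bounded variation while $dL(X^\alpha)$ and $dL(X^\beta)$ are carried by $\{X^\alpha=0\}$ and $\{X^\beta=0\}$ respectively — that is the one place where care is needed.
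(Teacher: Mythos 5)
Your proof is correct and follows essentially the same route as the paper: the paper also forms $\min\{\alpha_n,\underline{\alpha}_n\}$ and $\max\{\alpha_n,\overline{\alpha}_n\}$, sandwiches both $X^{\alpha_n}$ and $X^{\alpha}$ between the corresponding solutions via a comparison theorem for SDEs involving local time (it cites \cite[Theorem 3.4]{le2006one} for the comparison principle you flag as the delicate point), and concludes by the triangle inequality and \eqref{eq:monotoneapprox}.
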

\begin{proof}
Consider a sequence of partitions $\Pi_n$ and a sequence of c\`{a}dl\`{a}g functions $\alpha_n:[0,T] \to [0,1]$ as in the statement. Since $\alpha \in \mathcal{A}$, we can also consider two sequences of c\`{a}dl\`{a}g functions $\underline{\alpha}_n,\overline{\alpha}_n:[0,T] \to [0,1]$ that are constant on each interval $[t_i^n,t_i^{n+1})$, $\underline{\alpha}_n(t) \le \alpha(t) \le \overline{\alpha}_n(t)$ and $\lim_{n \to +\infty}\underline{\alpha}_n(t)=\lim_{n \to +\infty}\overline{\alpha}_n(t)=\alpha(t)$ for any $t \in [0,T]$. For each $n \in \mathbb{N}$ and $t \in [0,T]$ we define $\underline{\underline{\alpha_n}}(t)=\min\{\alpha_n(t),\underline{\alpha}_n(t)\}$ and $\overline{\overline{\alpha_n}}(t)=\max\{\alpha_n(t),\overline{\alpha}_n(t)\}$. By definition, $\underline{\underline{\alpha_n}},\overline{\overline{\alpha_n}}$ are c\`{a}dl\`{a}g, piecewise constant on $[t_i^n,t_i^{n+1})$ and $\underline{\underline{\alpha_n}}(t) \le \alpha(t) \le \overline{\overline{\alpha_n}}(t)$, $\underline{\underline{\alpha_n}}(t) \le \alpha_n(t) \le \overline{\overline{\alpha_n}}(t)$ and $\lim_{n \to +\infty}\underline{\underline{\alpha_n}}(t)=\lim_{n \to +\infty}\overline{\overline{\alpha_n}}(t)=\alpha(t)$ for any $t \in [0,T]$, hence, by \eqref{eq:monotoneapprox},
\begin{equation*}
	\lim_{n \to +\infty}\E\left[\sup_{t \in [0,T]}\left|X^{\underline{\underline{\alpha_n}}}_t-X^{\alpha}_t\right|\right]=\lim_{n \to +\infty}\E\left[\sup_{t \in [0,T]}\left|X^{\overline{\overline{\alpha_n}}}_t-X^{\alpha}_t\right|\right]=0.
\end{equation*}
By the comparison theorem for solutions of SDEs involving local time (see \cite[Theorem 3.4]{le2006one}), we know that a.e. and for any $t \in [0,T]$
\begin{equation*}
X_t^{\underline{\underline{\alpha_n}}}\le X_t^{\alpha_n} \le X_t^{\overline{\overline{\alpha_n}}}.
\end{equation*}
and
\begin{equation*}
X_t^{\underline{\underline{\alpha_n}}}\le X_t^{\alpha} \le X_t^{\overline{\overline{\alpha_n}}}.
\end{equation*}
In particular, we get, a.e. and for any $t \in [0,T]$
\begin{equation*}
|X_t^{\alpha_n}-X_t^{\alpha}| \le |X_t^{\underline{\underline{\alpha_n}}}-X_t^{\alpha}|+ |X_t^{\overline{\overline{\alpha_n}}}-X_t^{\alpha}|.
\end{equation*}
Taking the supremum over $[0,T]$, the expectation and then the limit as $n \to \infty$ we get the desired result.
\end{proof}

After the preceding conditions and results have been established, we can provide the following approximation result for $G_t^\alpha$.

\begin{Theorem}\label{th1}
Assume $\alpha \in \A$, $\mu,\sigma \in \mathcal{M}(\R_0^+)\cap L^{\infty}_{\rm loc}(\R_0^+)$ and fix $T>0$. 
Consider a sequence of partitions $\Pi_n:t^n_0=0<t^n_1<\cdots<t^n_n=T$ with ${\sf diam}(\Pi_n) \to 0$ as $n \to +\infty$. Let also $\mu_n,\sigma_n:[0,T] \to \R$ and $\alpha_n:[0,T] \to [0,1]$ be sequences of c\`{a}dl\`{a}g functions such that

\begin{itemize}
	\item[$(i)$] $\alpha_n,\mu_n,\sigma_n$ are constant on each interval $[t_i^n,t^n_{i+1})$, $i=0,\dots,n-1$;
	\item[$(ii)$] $\alpha_n(t) \to \alpha(t)$, $\mu_n(t) \to \mu(t)$ and $\sigma_n(t) \to \sigma(t)$ for any $t \in [0,T]$;
	\item[$(iii)$] There exists a constant $M$ such that
	\begin{equation*}
		\sup_{n \in \NN}\sup_{t \in [0,T]}(|\mu_n(t)|+|\sigma_n(t)|) \le M;
	\end{equation*} 
	\item[$(iv)$] There exists a constant $V$ such that
	\begin{equation*}
		\sup_{n \in \NN}\sum_{i=0}^{n}|\sigma_n(t_{i+1})-\sigma_n(t_{i})| \le V.
	\end{equation*}
\end{itemize}
Fix $\overline{G}^\alpha$ and denote by $G^{\alpha,\mu,\sigma}$ the solution of \eqref{eq:geomSDE} with parameters $\alpha,\mu,\sigma$. Then
\begin{equation}\label{eq:geommeanconv}
	\lim_{n \to +\infty}\E\left[\sup_{t \in [0,T]}\left|\log\left(\frac{G^{\alpha,\mu,\sigma}_t}{G^{\alpha_n,\mu_n,\sigma_n}_t}\right)\right|\right]=0.
\end{equation}
\end{Theorem}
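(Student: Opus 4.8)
The plan is to start from the explicit solution \eqref{eq:geomskew}, which applies verbatim to $G^{\alpha_n,\mu_n,\sigma_n}$ as well (by the theorem just proved, since $\alpha_n\in\mathcal{M}(\R_0^+;[0,1])$ and $\mu_n,\sigma_n$ are bounded measurable). As both processes start from the common $\overline{G}^\alpha>0$, that factor cancels and, writing $L_s:=L_s(X^\alpha)$, $L^n_s:=L_s(X^{\alpha_n})$,
\begin{align*}
\log\!\left(\frac{G^{\alpha,\mu,\sigma}_t}{G^{\alpha_n,\mu_n,\sigma_n}_t}\right)
&=\int_0^t\big(\mu(s)-\mu_n(s)\big)\,ds-\frac12\int_0^t\big(\sigma^2(s)-\sigma_n^2(s)\big)\,ds\\
&\quad+\int_0^t\sigma(s)\,d\big(X^\alpha-X^{\alpha_n}\big)_s+\int_0^t\big(\sigma(s)-\sigma_n(s)\big)\,dX^{\alpha_n}_s .
\end{align*}
The first line is deterministic and, uniformly in $t\in[0,T]$, at most $\int_0^T\big(|\mu-\mu_n|+M|\sigma-\sigma_n|\big)\,ds$ in absolute value, which tends to $0$ by $(ii)$, $(iii)$ and dominated convergence on the finite interval $[0,T]$. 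I would also note here that, by $(iv)$, each $\sigma_n$ has total variation $\le V$ on $[0,T]$, so that by $(ii)$ and the lower semicontinuity of the total variation under pointwise limits, $\sigma$ itself is of bounded variation with total variation $\le V$.

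The second term is the pathwise Stieltjes integral $\int_0^t\sigma(s)(2\alpha(s)-1)\,dL_s-\int_0^t\sigma(s)(2\alpha_n(s)-1)\,dL^n_s$, driven by $X^\alpha-X^{\alpha_n}$, which is a \emph{continuous} process of finite variation (difference of two continuous finite-variation processes, since $X^\alpha$ and $X^{\alpha_n}$ are driven by the same $W$). Because $\sigma$ is of bounded variation and the integrator is continuous (so no covariation term appears), integration by parts together with $X^\alpha_0=X^{\alpha_n}_0=0$ gives
\begin{equation*}
\int_0^t\sigma(s)\,d\big(X^\alpha-X^{\alpha_n}\big)_s=\sigma(t)\big(X^\alpha_t-X^{\alpha_n}_t\big)-\int_0^t\big(X^\alpha_s-X^{\alpha_n}_s\big)\,d\sigma(s),
\end{equation*}
so that $\sup_{t\in[0,T]}|\cdot|\le(M+V)\sup_{t\in[0,T]}\big|X^\alpha_t-X^{\alpha_n}_t\big|$. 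Taking expectations and applying the Proposition (whose hypotheses on $\alpha_n$ are exactly $(i)$--$(ii)$) makes this term vanish.

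It remains to estimate $\int_0^t(\sigma-\sigma_n)\,dX^{\alpha_n}_s=\int_0^t(\sigma-\sigma_n)\,dW_s+\int_0^t(\sigma-\sigma_n)(2\alpha_n-1)\,dL^n_s$. For the martingale part, Doob's $L^2$ maximal inequality and the It\^o isometry bound $\E\big[\sup_{t\le T}\big|\int_0^t(\sigma-\sigma_n)\,dW\big|^2\big]$ by $4\int_0^T(\sigma-\sigma_n)^2\,ds\to0$ (dominated convergence via $(ii)$--$(iii)$), and Cauchy--Schwarz transfers this to the $L^1$ norm of the supremum. The local-time part is the delicate one: $\sigma-\sigma_n\to0$ only pointwise, whereas $dL^n_s$ is carried by the Lebesgue-null zero set of $X^{\alpha_n}$, so no pathwise estimate can succeed; instead one passes to the \emph{mean} local-time measure. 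Using $dL^n_s\ge0$, $\sup_{t\le T}\big|\int_0^t(\sigma-\sigma_n)(2\alpha_n-1)\,dL^n_s\big|\le\int_0^T|\sigma-\sigma_n|\,dL^n_s$, and interchanging expectation with the Stieltjes integral (Tonelli), $\E\big[\int_0^T|\sigma-\sigma_n|\,dL^n_s\big]=\int_0^T|\sigma(s)-\sigma_n(s)|\,d\E[L^n_s]$. The crucial point is that $s\mapsto\E[L^n_s]$ is the same function for every $n$: by the Tanaka formula $\E[L^n_t]=\E[|X^{\alpha_n}_t|]$ — the drift $\int_0^t{\rm sign}(X^{\alpha_n}_s)(2\alpha_n(s)-1)\,dL^n_s$ vanishes because $L^n$ charges only $\{X^{\alpha_n}=0\}$, and $\int_0^t{\rm sign}(X^{\alpha_n}_s)\,dW_s$ is a centered martingale — while integrating against \eqref{eq:conddist} with $x=0$ gives $\E[|X^{\alpha_n}_t|]=\sqrt{2t/\pi}$, independent of $n$. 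Hence $\E\big[\int_0^T|\sigma-\sigma_n|\,dL^n_s\big]=\int_0^T|\sigma(s)-\sigma_n(s)|\,(2\pi s)^{-1/2}\,ds$, which tends to $0$ by dominated convergence since the integrand is pointwise null and dominated by $2M(2\pi s)^{-1/2}\in L^1(0,T)$.

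Collecting the four estimates and using subadditivity of the supremum together with the triangle inequality inside $\E[\sup_{t\in[0,T]}|\cdot|]$ yields \eqref{eq:geommeanconv}. I expect the main obstacle to be exactly the local-time term $\int_0^t(\sigma-\sigma_n)(2\alpha_n-1)\,dL^n_s$: one must recognise that the quantity to control is not the singular, $n$-dependent pathwise measure $dL^n_s$ but its expectation $d\E[L^n_s]$, and that the latter is a fixed $L^1(0,T)$ density independent of $n$ — precisely what lets the final dominated-convergence argument close. By comparison, the integration-by-parts step for $\sigma$ merely of bounded variation and the transfer of $(iv)$ to $\sigma$ by lower semicontinuity are routine, but worth stating carefully.
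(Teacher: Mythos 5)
Your proof is correct, but it takes a genuinely different route from the paper's. The paper inserts the intermediate process $G^{\alpha,\mu_n,\sigma_n}$, so the two resulting terms are $\int_0^t(\sigma_n-\sigma)\,dX^\alpha_s$ — whose local-time part is handled by dominated convergence against the fixed, $n$-independent measure $\E[dL_s(X^\alpha)]$, using only $\E[L_T(X^\alpha)]\le T$ — and $\int_0^t\sigma_n(s)\,d(X^\alpha-X^{\alpha_n})_s$, which, because $\sigma_n$ is piecewise constant, is a finite sum bounded by $(V+M)\sup_{t\in[0,T]}|X^\alpha_t-X^{\alpha_n}_t|$ via discrete Abel summation, so that hypothesis $(iv)$ is used verbatim. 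Your grouping puts $\sigma$ (rather than $\sigma_n$) against the difference $X^\alpha-X^{\alpha_n}$ and $\sigma-\sigma_n$ against $X^{\alpha_n}$, which forces you to supply two ingredients the paper never needs: the transfer of the uniform variation bound $(iv)$ to $\sigma$ itself (via lower semicontinuity of total variation under pointwise limits), so that a continuous integration by parts delivers the same $(M+V)$ bound; and the identity $\E[L_t(X^{\alpha_n})]=\E[|X^{\alpha_n}_t|]=\sqrt{2t/\pi}$, which makes the mean local-time measure $n$-independent and lets dominated convergence close the $\int_0^T|\sigma-\sigma_n|\,dL(X^{\alpha_n})$ term. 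Both extra steps are correct — the symmetric Tanaka argument kills the drift because the local time charges only the zero set where ${\rm sign}$ vanishes, and the odd part of \eqref{eq:conddist} cancels in the computation of $\E[|X^{\alpha_n}_t|]$ — and the mean local-time identity is a nice observation in its own right; the paper's choice of intermediate process is simply the one that avoids having to prove either.
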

\begin{proof}
	Let us first observe that
	\begin{equation*}
		\left|\log\left(\frac{G^{\alpha,\mu,\sigma}_t}{G^{\alpha_n,\mu_n,\sigma_n}_t}\right)\right| \le \left|\log\left(\frac{G^{\alpha,\mu,\sigma}_t}{G^{\alpha,\mu_n,\sigma_n}_t}\right)\right|+\left|\log\left(\frac{G^{\alpha,\mu_n,\sigma_n}_t}{G^{\alpha_n,\mu_n,\sigma_n}_t}\right)\right|.
	\end{equation*}
	To estimate the first logarithm, notice that
	\begin{equation*}
		\left|\log\left(\frac{G^{\alpha,\mu,\sigma}_t}{\widetilde{G}^{\alpha,\mu_n,\sigma_n}_t}\right)\right|\le \int_0^t \left|\mu(s)-\mu_n(s)-\frac{\sigma^2(s)-\sigma^2_n(s)}{2}\right|ds+\left|\int_0^t \left(\sigma_n(s)-\sigma(s)\right)dX_s^\alpha\right|.
	\end{equation*}
	For the first integral, we simply get
	\begin{multline*}
		\E\left[\sup_{t \in [0,T]}\int_0^t \left|\mu(s)-\mu_n(s)-\frac{\sigma^2(s)-\sigma^2_n(s)}{2}\right|ds\right]\\
        =\E\left[\int_0^T \left|\mu(s)-\mu_n(s)-\frac{\sigma^2(s)-\sigma^2_n(s)}{2}\right|ds\right] \to 0
	\end{multline*}
	as $n \to \infty$ by the dominated convergence theorem. To handle the second summand, observe that
	\begin{equation*}
		\int_0^t \left(\sigma_n(s)-\sigma(s)\right)dX_s^\alpha=\int_0^t \left(\sigma_n(s)-\sigma(s)\right)(2\alpha(t)-1)dL_t(X^\alpha)+\int_0^t \left(\sigma_n(s)-\sigma(s)\right)dW_s
	\end{equation*}
	and then
	\begin{equation*}
		\left|\int_0^t \left(\sigma_n(s)-\sigma(s)\right)dX_s^\alpha\right|\le \int_0^t \left|\sigma_n(s)-\sigma(s)\right||2\alpha(t)-1|dL_t(X^\alpha)+\left|\int_0^t \left(\sigma_n(s)-\sigma(s)\right)dW_s\right|.
	\end{equation*}
	Taking the supremum and the expectation we have
	\begin{align*}
		\E\left[\sup_{t \in [0,T]}\left|\int_0^t \left(\sigma_n(s)-\sigma(s)\right)dX_s^\alpha\right|\right] &\le \E\left[\int_0^T \left|\sigma_n(s)-\sigma(s)\right||2\alpha(t)-1|dL_t(X^\alpha)\right]\\
		&+\E\left[\sup_{t \in [0,T]}\left|\int_0^t \left(\sigma_n(s)-\sigma(s)\right)dW_s\right|\right].
	\end{align*}
	On the one hand, we have $\left|\sigma_n(s)-\sigma(s)\right||2\alpha(t)-1| \le 2M$, where
	\begin{equation*}
		\E\left[\int_0^T 2MdL_t(X^\alpha)\right] \le 2M \E\left[L_T(X^\alpha)\right] \le 2MT<\infty.
	\end{equation*}
	Hence, by the dominated convergence theorem,
	\begin{equation*}
		\lim_{n \to \infty}\E\left[\int_0^T \left|\sigma_n(s)-\sigma(s)\right||2\alpha(t)-1|dL_t(X^\alpha)\right]=0.
	\end{equation*}
	On the other hand, we have, by Doob's maximal inequality \cite[Theorem II.1.7]{revuz2013continuous}
	\begin{multline*}
		\E\left[\sup_{t \in [0,T]}\left|\int_0^t \left(\sigma_n(s)-\sigma(s)\right)dW_s\right|\right] \le \sqrt{\E\left[\sup_{t \in [0,T]}\left|\int_0^t \left(\sigma_n(s)-\sigma(s)\right)dW_s\right|^2\right]} \\
		\le 2\sqrt{\E\left[\int_0^T \left|\sigma_n(s)-\sigma(s)\right|^2_s\right]} \to 0,
	\end{multline*}
		where the limit holds by dominated convergence. This proves that
		\begin{equation*}
			\lim_{n \to +\infty}\E\left[\sup_{t \in [0,T]}\log\left(\frac{G^{\alpha,\mu,\sigma}_t}{\widetilde{G}^{\alpha,\mu_n,\sigma_n}_t}\right)\right]=0.
		\end{equation*}
	Now let us work with the second logarithm. Let $Y^n_t=X^{\alpha}_t-X^{\alpha_n}_t$ and $i_{\star}^n(t)=\min\{i=0,\dots, n-1: \ t_i^n \ge t\}$. We have
	\begin{align*}
		\left|\log\left(\frac{G^{\alpha,\mu_n,\sigma_n}_t}{G^{\alpha_n,\mu_n,\sigma_n}_t}\right)\right|&=\left|\sum_{i=1}^{i_\star^n(t)-1}\sigma_n(t_{i-1}^n)(Y^n_{t_{i}^n}-Y^n_{t_{i-1}^n})+\sigma_n(t_{i^n_{\star}(t)-1}^n)(Y^n_{t}-Y^n_{t_{i_{\star}^n(t)-1}^n})\right|\\
		&=\left|\sum_{i=1}^{i_\star^n(t)-1}\left(\sigma_n(t_{i-1}^n)-\sigma_n(t_{i}^n)\right)Y^n_{t_{i}^n}+\sigma_n(t^n_{i^n_\star(t)-1})Y^n_t\right|\\
		&\le \left(\sum_{i=1}^{i_\star^n(t)-1}\left|\sigma_n(t_{i-1}^n)-\sigma_n(t_{i}^n)\right|+|\sigma_n(t^n_{i^n_\star(t)-1})|\right)\sup_{t \in [0,T]}|Y^n_t|\\
		&\le (V+M)\sup_{t \in [0,T]}|Y^n_t|.
	\end{align*}
	Taking the supremum, the expectation and the limit as $n \to +\infty$ and using \eqref{eq:limitalpha} we get
	\begin{equation*}
		\lim_{n \to +\infty}\E\left[\left|\log\left(\frac{G^{\alpha,\mu_n,\sigma_n}_t}{G^{\alpha_n,\mu_n,\sigma_n}_t}\right)\right|\right] \le (V+M)\lim_{n \to +\infty}\E\left[\sup_{t \in [0,T]}|Y^n_t|\right]=0.
	\end{equation*} 
	This ends the proof.
\end{proof}
\begin{remark}
	Observe that condition $(iv)$ is satisfied if, for instance, $\sigma$ is a (c\`{a}dl\`{a}g) bounded variation function and $\sigma_n(t)=\sigma(t_{i-1}^n)$ for any $t \in [t_{i-1}^n,t_i^n)$, $i=1,\dots,n$.
	
	Furthermore, as a consequence of the uniform geometric mean convergence \eqref{eq:geommeanconv}, we know that $\frac{G_t^{\alpha_n,\mu_n,\sigma_n}}{G_t^{\alpha,\mu,\sigma}} \overset{\bP}{\to} 1$ and there exists a subsequence such that $\frac{G_t^{\alpha_{n_k},\mu_{n_k},\sigma_{n_k}}}{G_t^{\alpha,\mu,\sigma}} \overset{a.s.}{\to} 1$.
\end{remark}

Now, let us evaluate the conditional expectation of $G_t^\alpha$ given $G_s^\alpha$, where $t \ge s$, in case $\sigma$ is constant.
\begin{prop}
Let $s \le t$ and assume that $\sigma(s)\equiv \sigma$ and assume further that $\overline{G}^\alpha$ is degenerate. Then we have

\begin{align} \label{eq:condexp}
 \begin{split}
 &\E[G^{\alpha}_t \mid G^{\alpha}_s]\\
	&=G^\alpha_s \exp\left(\int_0^t\left(\mu(\tau)-\frac{\sigma^2}{2}\right)d\tau-\log\left(\frac{G_s^\alpha}{\overline{G}^\alpha}\right)\right)\\
	&\ \times \int_{\R}\int_0^{t-s}\frac{|y|(1+(2\alpha(u+s)-1){\sf sign}(y))}{2\pi(t-s-u)^{\frac{3}{2}}}\exp\left(-\frac{y^2}{2(t-s-u)}-\frac{(F(G^\alpha_s))^2}{2u}+\sigma y\right)du\, dy\\
	&\ +\frac{1}{\sqrt{2\pi(t-s)}}\int_{\R}\left[\exp\left(-\frac{(y-F(G^\alpha_s))^2}{2(t-s)}\right)-\exp\left(-\frac{(y+F(G^\alpha_s))^2}{2(t-s)}\right)\right]e^{\sigma y}\mathbf{1}_{\R^+}(F(G^\alpha_s)y)dy,
	\end{split}
\end{align}

where
\begin{equation}\label{eq:Fsigmamu}
	F(x)=\frac{1}{\sigma}\left(\log\left(\frac{x}{\overline{G}^\alpha}\right)-\int_0^s\left(\mu(\tau)-\frac{\sigma^2}{2}\right)d\tau\right)
\end{equation}
\end{prop}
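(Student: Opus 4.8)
The plan is to use the fact that, when $\sigma$ is constant, $G^\alpha_t$ is an explicit deterministic function of the single random variable $X^\alpha_t$, and then to invoke the time-inhomogeneous Markov property of $X^\alpha$ together with the transition density \eqref{eq:conddist}. Since $\sigma(s)\equiv\sigma$ and $X^\alpha_0=0$, formula \eqref{eq:geomskew} becomes
\[
	G^\alpha_t=\overline{G}^\alpha\exp\left(\int_0^t\left(\mu(\tau)-\frac{\sigma^2}{2}\right)d\tau+\sigma X^\alpha_t\right),
\]
so that $\log(G^\alpha_t/\overline{G}^\alpha)=\int_0^t(\mu(\tau)-\sigma^2/2)d\tau+\sigma X^\alpha_t$; inverting this relation at time $s$ gives $X^\alpha_s=F(G^\alpha_s)$ with $F$ as in \eqref{eq:Fsigmamu}. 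The hypothesis that $\overline{G}^\alpha$ is degenerate enters exactly here: it makes $x\mapsto\overline{G}^\alpha\exp(\int_0^s(\mu(\tau)-\sigma^2/2)d\tau+\sigma x)$ a measurable bijection of $\R$ onto $\R^+$ (here $\sigma\neq0$ is implicit in \eqref{eq:Fsigmamu}), so that $\sigma(G^\alpha_s)=\sigma(X^\alpha_s)$ and conditioning on $G^\alpha_s$ is the same as conditioning on $X^\alpha_s$.

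Next I would reduce the conditional expectation to a moment-generating-function computation for $X^\alpha$. Writing $G^\alpha_t=\overline{G}^\alpha\exp(\int_0^t(\mu(\tau)-\sigma^2/2)d\tau)\,e^{\sigma X^\alpha_t}$ and using the Markov property of $X^\alpha$ (all quantities are integrable: since $\overline{G}^\alpha$ is degenerate it lies in every $L^p$, so $G^\alpha\in M^2_{\rm loc}(\Omega)$ by the first theorem of this section, in particular $\E[G^\alpha_t]<\infty$),
\[
	\E[G^\alpha_t\mid G^\alpha_s]=\overline{G}^\alpha\exp\left(\int_0^t\left(\mu(\tau)-\frac{\sigma^2}{2}\right)d\tau\right)\int_{\R}e^{\sigma y}\,p_\alpha(t,y\mid s,X^\alpha_s)\,dy.
\]
Then I would substitute the explicit expression \eqref{eq:conddist} for $p_\alpha(t,y\mid s,x)$ with $x=X^\alpha_s=F(G^\alpha_s)$. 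Since $\alpha$ takes values in $[0,1]$, the factor $1+(2\alpha(u+s)-1){\sf sign}(y)$ is nonnegative, so the first summand of $p_\alpha$ has a nonnegative integrand and Tonelli's theorem legitimizes interchanging the $dy$ and $du$ integrations; a short Gaussian estimate (complete the square in $-y^2/(2(t-s-u))+\sigma y$, then bound the remaining $u$-integral near the endpoint $u=t-s$, where the $y$-integral contributes a factor of order $t-s-u$ against the $(t-s-u)^{-3/2}$ in the kernel) shows the resulting double integral is finite.

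Finally, it remains to package the answer in the stated form. Moving $e^{\sigma y}$ inside both summands of $\int_{\R}e^{\sigma y}p_\alpha(t,y\mid s,F(G^\alpha_s))\,dy$ reproduces the two integrals appearing in \eqref{eq:condexp}, so the only remaining step is to rewrite the prefactor: since $\overline{G}^\alpha=G^\alpha_s\exp(-\log(G^\alpha_s/\overline{G}^\alpha))$,
\[
	\overline{G}^\alpha\exp\left(\int_0^t\left(\mu(\tau)-\frac{\sigma^2}{2}\right)d\tau\right)=G^\alpha_s\exp\left(\int_0^t\left(\mu(\tau)-\frac{\sigma^2}{2}\right)d\tau-\log\left(\frac{G^\alpha_s}{\overline{G}^\alpha}\right)\right),
\]
which is precisely the prefactor in \eqref{eq:condexp}. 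I expect the genuinely delicate points to be the two ``soft'' ones above — establishing $\sigma(G^\alpha_s)=\sigma(X^\alpha_s)$ (where degeneracy of $\overline{G}^\alpha$ is essential) and verifying the integrability needed for the Markov step and for Tonelli — whereas the rest is bookkeeping with \eqref{eq:geomskew}, \eqref{eq:conddist} and \eqref{eq:Fsigmamu}.
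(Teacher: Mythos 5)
Your proposal is correct and follows essentially the same route as the paper: express $G^\alpha_t$ as an explicit function of $X^\alpha_t$, identify $X^\alpha_s=F(G^\alpha_s)$, and evaluate $\E[e^{\sigma X^\alpha_t}\mid G^\alpha_s]$ via the transition density \eqref{eq:conddist}. Your additional remarks on $\sigma(G^\alpha_s)=\sigma(X^\alpha_s)$ and on the integrability/Tonelli justification are points the paper's proof leaves implicit, but they do not change the argument.
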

\begin{proof}
We clearly have
\begin{equation*}
	G^\alpha_t=G^\alpha_s \exp\left(\int_s^t\left(\mu(\tau)-\frac{\sigma^2}{2}\right)d\tau+\sigma (X^\alpha_t-X^\alpha_s)\right).
\end{equation*}
Furthermore, observe that
\begin{equation*}
	X^\alpha_s=\frac{1}{\sigma}\left(\log\left(\frac{G_s^\alpha}{\overline{G}^\alpha}\right)-\int_0^s\left(\mu(\tau)-\frac{\sigma^2}{2}\right)d\tau\right)=F(G^\alpha_s).
\end{equation*}
Hence we obtain
\begin{equation}\label{eq:condexp1}
	\E[G^{\alpha}_t \mid G^{\alpha}_s]=G^\alpha_s \exp\left(\int_0^t\left(\mu(\tau)-\frac{\sigma^2}{2}\right)d\tau-\log\left(\frac{G_s^\alpha}{\overline{G}^\alpha}\right)\right)\E\left[e^{\sigma X^\alpha_t} \mid G^\alpha_s\right].
\end{equation}
To evaluate the remaining conditional expectation, we use \eqref{eq:conddist} so that
\begin{align*}
	&\E\left[e^{\sigma X^\alpha_t} \mid G^\alpha_s\right]=\int_{\R}e^{\sigma y}p_\alpha(t,y\mid s, F(G^\alpha_s))dy\\
	&=\int_{\R}\int_0^{t-s}\frac{|y|(1+(2\alpha(u+s)-1){\sf sign}(y))}{2\pi(t-s-u)^{\frac{3}{2}}}\exp\left(-\frac{y^2}{2(t-s-u)}-\frac{(F(G^\alpha_s))^2}{2u}+\sigma y\right)du\, dy\\
	&\quad +\frac{1}{\sqrt{2\pi(t-s)}}\int_{\R}\left[\exp\left(-\frac{(y-F(G^\alpha_s))^2}{2(t-s)}\right)-\exp\left(-\frac{(y+F(G^\alpha_s))^2}{2(t-s)}\right)\right]e^{\sigma y}\mathbf{1}_{\R^+}(F(G^\alpha_s)y)dy.
\end{align*}
Combining this with \eqref{eq:condexp1} we get the desired formula.
\end{proof}

\section{Empirical results} \label{Sec:Results}


To calibrate the functional parameters $\mu(t), \sigma(t), \alpha(t)$ for any $t \in [0, T]$, which are unknown a priori, and forecast our time series $V$, we follow the algorithm in Table \ref{Tab:Algo}.
%
\begin{table}[!htbp]\small
	\caption{Pseudocode of the calibration and forecasting algorithm}
	\label{Tab:Algo}
	\centering
	\begin{threeparttable}
		\begin{tabular}{|l|}
			\hline\noalign{\smallskip}
            \color{green}1. 
             \color{black} Let $N$ be the length of our time series $(V_u)_{u\in[1, N]}$. \\       
           \color{green}2. \color{black} Consider a rolling window of fixed size e.g. $L=12$ values. \\
           \color{green}3. \color{black} Choose the predictive horizon $h\geq 1$. Start from $t=L$. \\			
			\color{green}4. \color{black} \textbf{while} $t\leq N-h$ \\ 
			\color{green}5. \color{black} Take the observations of $V_u$, with $u\in [t-L+1,t]$.\\ 
			\color{green}6. \color{black} Assume firstly that $\mu,\sigma,\alpha$ are constants and calibrate them through a \\
           \quad maximum log-likelihood estimation (MLE) (see the tool \cite{azzalini2015package} and \cite[Section 5.1]{bufalo2022forecasting}). \\
\quad  Denote by $(\widehat{\mu}_t,\widehat{\sigma}_t,\widehat{\alpha}_t)$ such estimates in the present rolling window.

   \\
   \color{green}7. \color{black} Compute the prevision $\widehat{V}_{t+h}:=\E[V_{t+h} \mid V_t]$ through Eq. \eqref{eq:condexp}; \\
			\color{green}7. \color{black} Update $t=t+1$; \\
			\color{green}8. \color{black} \textbf{end} \\
\color{green}9. \color{black} Interpolate the series of calibrated parameters $(\widehat{\mu}_t,\widehat{\sigma}_t,\widehat{\alpha}_t)_{t\in[1, N-h]}$ through 3\\
\quad cubic splines, i.e.
$
f^{\mu}(t)=\sum_{i=0}^3 a^{\mu}_it^i, \quad f^{\sigma}(t)=\sum_{i=0}^3a^{\sigma}_it^i, \quad f^{\alpha}(t)=\sum_{i=0}^3a^{\alpha}_it^i. 
$
\\
\quad and obtain the optimal estimates $(\widehat{a}^{\mu}_i,\widehat{a}^{\sigma}_i, \widehat{a}^{\alpha}_i)_{i\in [0,3]}$. 
\\

\noalign{\smallskip}\hline
		\end{tabular}
	\end{threeparttable}
\end{table}

Note that the assumption of piecewise constant parameters made in calibration procedure in Table \ref{Tab:Algo} row $6.$ is guaranteed by Theorem \ref{th1}.
%
%
Figure \ref{F:par_cal} shows the behavior of the estimated parameters $(\widehat{\mu}_t, \widehat{\sigma}_t, \widehat{\alpha}_t)_{t>0}$, while Table \ref{Tab:est2} reports the fitted parameters $(\widehat{a}^{\mu}_i, \widehat{a}^{\sigma}_i, \widehat{a}^{\alpha}_i)_{i\in [0,3]}$.

\begin{figure}[!htbp]
	\centering
\includegraphics[width=1\textwidth]{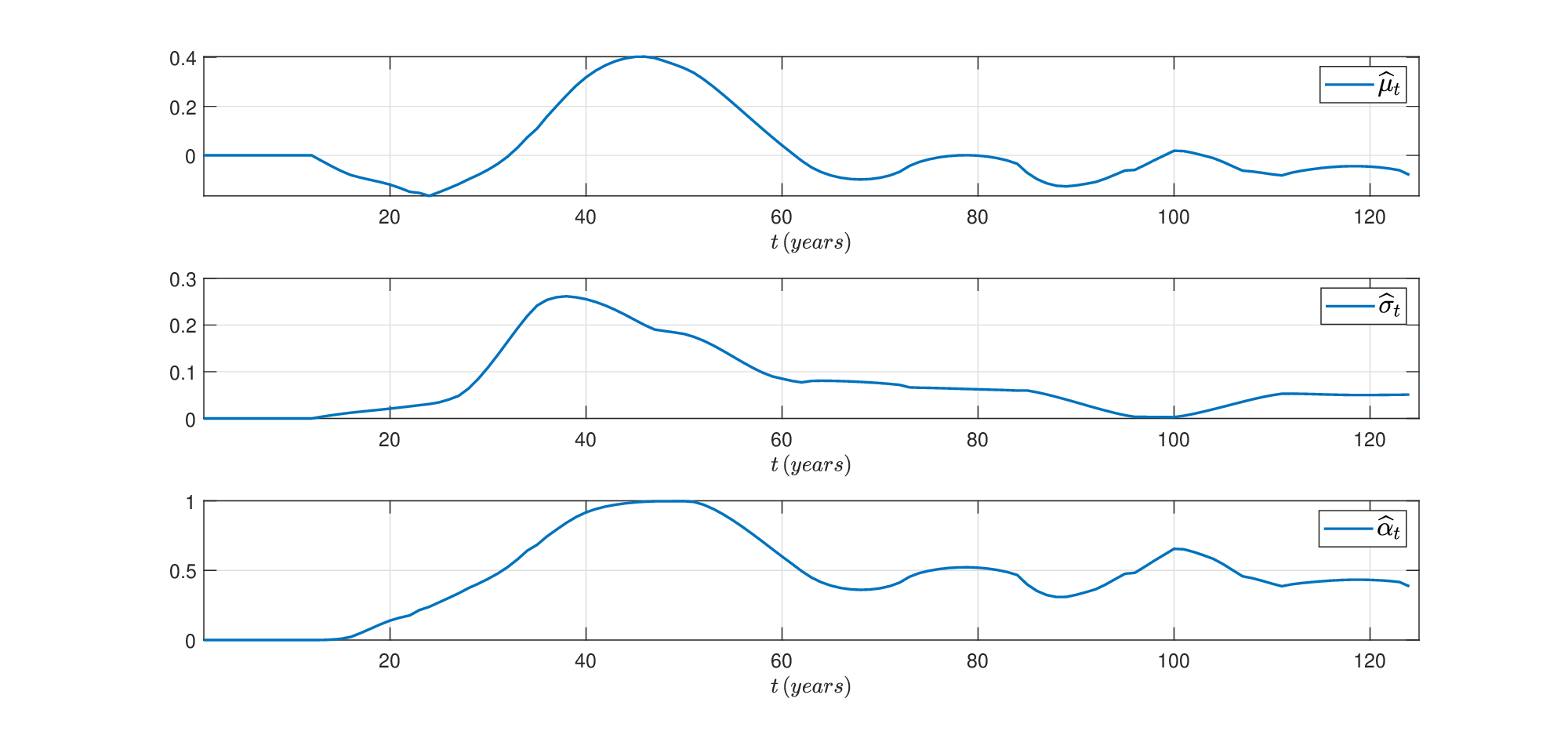}
	\caption{Estimated parameters $(\widehat{\mu}_t,\widehat{\sigma}_t,\widehat{\alpha}_t)_{t>0}$.}
		\label{F:par_cal}
\end{figure}

\begin{table}[!htbp]
\small
\centering
\begin{tabular}{@{}cccc@{}}
\toprule
\multicolumn{4}{c}{Fitted parameters} \\ \midrule
 
\rowcolor{Gray}
\textbf{$\widehat{a}^{\mu}_0$} &
  \textbf{$\widehat{a}^{\mu}_1$} &
  \textbf{$\widehat{a}^{\mu}_2$} &
  \textbf{$\widehat{a}^{\mu}_3$} \\ 
  
\multicolumn{1}{c}{-3.3682$\cdot 10^{-5}$} &
\multicolumn{1}{c}{-6.8061$\cdot 10^{-5}$} &
\multicolumn{1}{c}{-5.4564$\cdot 10^{-4}$} &
\multicolumn{1}{c}{-5.4564$\cdot 10^{-5}$} \\ \midrule
  
\rowcolor{Gray}
\textbf{$\widehat{a}^{\sigma}_0$} &
\textbf{$\widehat{a}^{\sigma}_1$} &
\textbf{$\widehat{a}^{\sigma}_2$} &
\textbf{$\widehat{a}^{\sigma}_3$} \\
  
\multicolumn{1}{c}{0.0160} &
\multicolumn{1}{c}{1.3063$\cdot 10^{-6}$} &
\multicolumn{1}{c}{1.6407$\cdot 10^{-6}$} &
\multicolumn{1}{c}{4.1232$\cdot 10^{-6}$} \\  \midrule

\rowcolor{Gray}
  \textbf{$\widehat{a}^{\alpha}_0$} & 
\textbf{$\widehat{a}^{\alpha}_1$} &
  \textbf{$\widehat{a}^{\alpha}_2$}  &
  \textbf{$\widehat{a}^{\alpha}_3$} \\ 
  
\multicolumn{1}{c}{-6.7491$\cdot 10^{-5}$} &
\multicolumn{1}{c}{-1.0700$\cdot 10^{-4}$} &
\multicolumn{1}{c}{0.0033} &
\multicolumn{1}{c}{0.4632} \\ 
  
  \bottomrule
\end{tabular}
  \caption{Estimated parameters $(\widehat{a}^{\mu}_i,\widehat{a}^{\sigma}_i, \widehat{a}^{\alpha}_i)_{i\in [0,3]}$ through spline interpolation.
  }
\label{Tab:est2}
\end{table}
Figure \ref{Fig:for} displays the relative error between the real data and the forecasted values obtained via Eq. \eqref{eq:condexp} and a deep learning approach (see \cite{qadeer2020long}). 
%
These two procedures yield (normalized) root-mean-squared errors equal to 0.0878 and 0.2237, respectively.

\begin{figure}[!htbp]
	\centering
\includegraphics[width=1\textwidth]{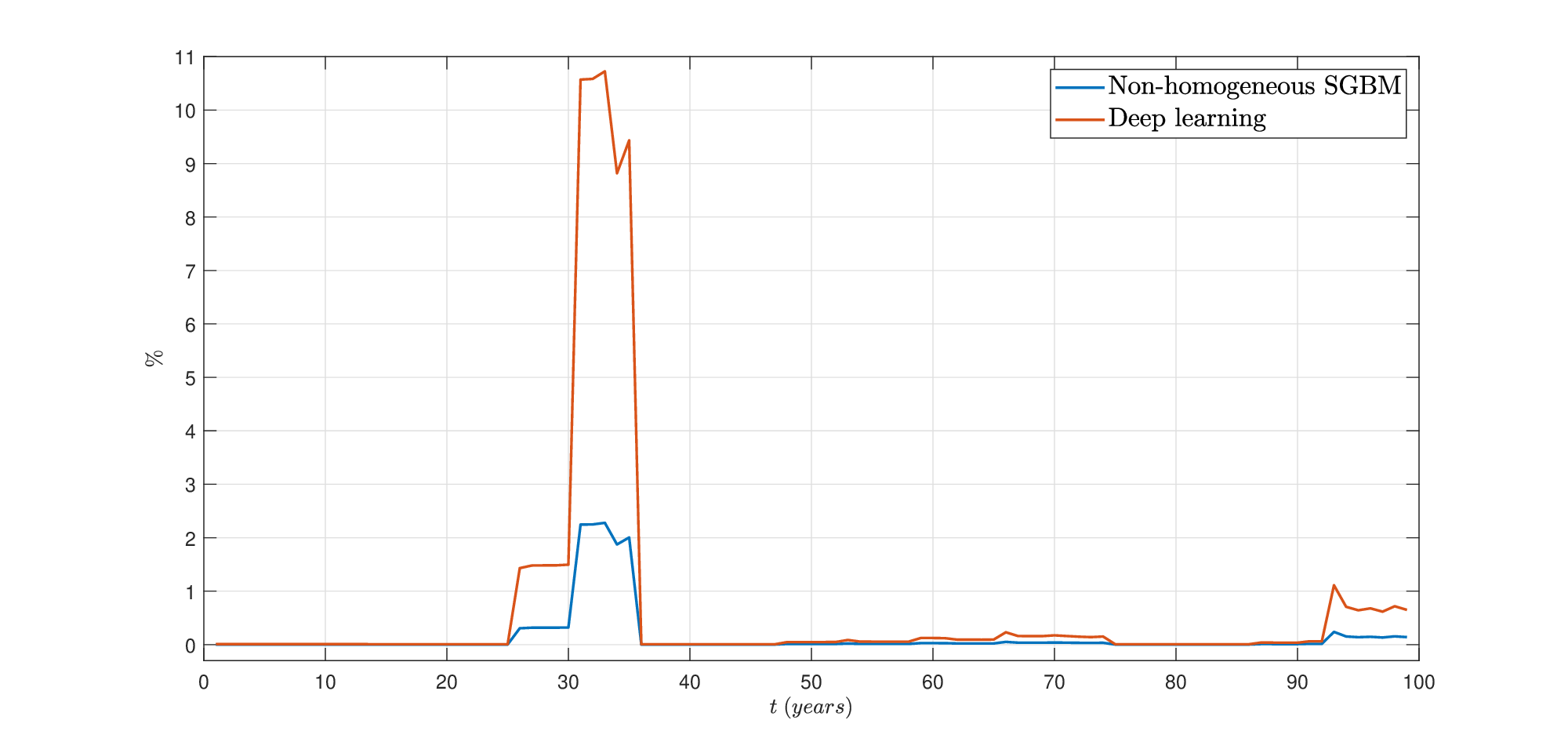}
	\caption{Relative error between real data and forecasted values obtained by Eq. \eqref{eq:condexp} (solid blue line) and a deep learning procedure (red solid line).}
		\label{Fig:for}
\end{figure}

\newpage
\section{Conclusion} \label{Sec:Conclusion}
In this paper, a novel model has been introduced for assessing the impact of natural disasters on affected populations, particularly for loss projection. This model integrates historical data, addresses data skewness, and accommodates temporal dependencies to predict shifts in mortality. To accomplish this, a time-varying skew Brownian motion framework has been presented, with the existence and uniqueness of the solution being established. Within this framework, parameters evolve over time, and past occurrences are integrated via volatility. Pseudocode for calibration has been provided, and a test against a deep learning model has demonstrated the advantages of the proposed approach.

\bibliographystyle{apalike} 
\bibliography{MyBibGBM}

\end{document}